\newtheorem{theorem}{Theorem}
\newtheorem{corollary}[theorem]{Corollary}
\newtheorem{lemma}[theorem]{Lemma}
\newcommand{\prob}[2]{\mathop{\mathrm{Pr}}_{#1}[#2]}
\newcommand{\avg}[2]{\mathop{\textbf{E}}_{#1}\left[#2\right]}
\newcommand{\F}{\mathbb{F}}
\newcommand{\mc}[1]{\mathcal{#1}}
\newcommand{\Supp}{\mathrm{Supp}}
\newcommand{\VP}{\mathrm{VP}}
\newcommand{\VNP}{\mathrm{VNP}}
\newcommand{\coeff}{\mathrm{Coeff}}
\newcommand{\mon}{\mathfrak{m}}
\title{Strongly Exponential Separation Between Monotone $\VP$ and Monotone $\VNP$}
\author{Srikanth Srinivasan\thanks{Email: \texttt{srikanth@math.iitb.ac.in}}\\ Department of Mathematics\\ IIT Bombay}
\begin{document}

\maketitle

\begin{abstract}
We show that there is a sequence of explicit multilinear polynomials $P_n(x_1,\ldots,x_n)\in \mathbb{R}[x_1,\ldots,x_n]$ with non-negative coefficients that lies in monotone $\VNP$ such that any monotone algebraic circuit for $P_n$ must have size $\exp(\Omega(n)).$ This builds on (and strengthens) a result of Yehudayoff (2018) who showed a lower bound of $\exp(\tilde{\Omega}(\sqrt{n})).$
\end{abstract}

\section{Introduction}

This paper deals with a problem in \emph{Algebraic Complexity}, which is the study of the complexity of computing multivariate polynomials over some underlying field $\F$. The model of computation is the \emph{Algebraic circuit} model, which computes polynomials from $\F[x_1,\ldots,x_n]$ using the basic sum and product operations in this ring. This model and its variants have been studied by a large body of work (see, e.g. the surveys~\cite{ShpilkaYehudayoff, Ramprasadgit}). 

The central question in the area is Valiant's~\cite{Valiant} $\mathrm{VP}$ vs. $\mathrm{VNP}$ question. The set $\mathrm{VP}$ contains sequences $(P_n(x_1,\ldots,x_n))_{n\geq 1}$ of polynomials of polynomially bounded degree\footnote{i.e. $\deg(P_n)\leq n^{O(1)}$} that can be computed by polynomial-sized algebraic circuits. The class $\mathrm{VNP}$ contains sequences $(Q_n(x_1,\ldots,x_n))_{n\geq 1}$ where 
\[
Q_n(x_1,\ldots,x_n) = \sum_{b_1,\ldots,b_m\in \{0,1\}}P_{n+m}(x_1,\ldots,x_n,b_1,\ldots,b_m)
\] 
where $m$ is polynomially bounded in $n$ and $(P_{r}(x_1,\ldots,x_r))_{r\geq 1}$ is in $\mathrm{VP}.$

Like its Boolean analogue, the $\mathrm{VP}$ vs. $\mathrm{VNP}$ question has proved stubbornly hard to resolve, the principal bottleneck being our inability to prove explicit algebraic circuit lower bounds. Given this, it is natural to look at variants of this question. 

In a recent paper~\cite{amir}, Yehudayoff considered the \emph{monotone} version of the $\mathrm{VP}$ vs. $\mathrm{VNP}$ question, which is defined as follows. The underlying field is $\mathbb{R}$ and the polynomials being computed have non-negative coefficients. A \emph{monotone} algebraic circuit is one where all the constants appearing in the circuit are non-negative. The monotone versions of $\mathrm{VP}$ and $\mathrm{VNP}$, denoted $\mathrm{MVP}$ and $\mathrm{MVNP}$ respectively, are defined analogously: $\mathrm{MVP}$ contains (sequences of) polynomials that have small \emph{monotone} algebraic circuits; $\mathrm{MVNP}$ contains (sequences of) polynomials that can be written as exponential Boolean sums over polynomials in $\mathrm{MVP}.$

Monotone algebraic circuits have been studied since the 80s, and explicit exponential lower bounds are known for this model via the work of Schnorr~\cite{Schnorr} and Jerrum and Snir~\cite{JerrumSnir} (see also~\cite{Valiantnegation,ShamirSnir,gashkov-sergeev,RazYehudayoff}). However, as Yehudayoff~\cite{amir} pointed out, these results do not imply a separation between $\mathrm{MVP}$ and $\mathrm{MVNP}$. In fact, most\footnote{The one exception to this seems to be a lower bound of Raz and Yehudayoff~\cite{RazYehudayoff}. Here, it is unclear whether the hard polynomials lie in $\mathrm{MVNP}$ but we are unable to rule it out.} of the monotone circuit lower bounds proved in earlier work also imply that the same polynomials do not belong to $\mathrm{MVNP},$ and hence do not imply a separation between these two classes. 

The main result of~\cite{amir} was the resolution of the $\mathrm{MVP}$ vs. $\mathrm{MVNP}$ question. More precisely, Yehudayoff showed that there is an explicit sequence of multilinear polynomials $(P_n(x_1,\ldots,x_n))_{n\geq 1}$ in $\mathrm{MVNP}$ such that any monotone algebraic circuit for $P_n$ must have size $\exp(\tilde{\Omega}(\sqrt{n})).$

In this paper, we strengthen this result to a strongly exponential lower bound.

\begin{theorem}
\label{thm:mainintro}
There is an explicit sequence of multilinear polynomials $(P_n(x_1,\ldots,x_n))_{n\geq 1}$ in $\mathrm{MVNP}$ such that any monotone algebraic circuit for $P_n$ must have size $2^{\Omega(n)}.$
\end{theorem}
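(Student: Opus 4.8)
The plan is to separate the argument into a combinatorial core — a monotone circuit lower bound for any multilinear polynomial whose monomial support is a suitably ``rectangle-robust'' set family — and a construction of such a family that additionally lies in $\mathrm{MVNP}$. For the core, the first ingredient is the standard structure theorem for monotone circuits in its monotone multilinear form (Hyafil / Valiant--Skyum--Berkowitz--Rackoff, as used by Jerrum--Snir and by Yehudayoff): if a multilinear polynomial $P\in\mathbb{R}_{\geq 0}[x_1,\dots,x_n]$ having a degree-$n$ monomial is computed by a monotone circuit of size $s$, then $P=\sum_{i=1}^{s^{O(1)}} g_i h_i$ where each $g_i,h_i$ is a monotone multilinear polynomial, the variable sets $X_i=\mathrm{Var}(g_i)$ and $Y_i=\mathrm{Var}(h_i)$ partition $[n]$, and $n/3\leq |X_i|\leq 2n/3$. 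Since all constants are non-negative there is no cancellation, so identifying a monomial with the subset of $[n]$ it uses we get $\Supp(P)=\bigcup_i\, \mathcal{G}_i *_{\pi_i} \mathcal{H}_i$, a union of $s^{O(1)}$ \emph{balanced combinatorial rectangles}: sets of the form $\{A\cup B: A\in\mathcal{G}_i,\ B\in\mathcal{H}_i\}$ with $\mathcal{G}_i\subseteq 2^{X_i}$, $\mathcal{H}_i\subseteq 2^{Y_i}$ and $\pi_i=(X_i,Y_i)$ balanced. Hence it suffices to exhibit an explicit family $\mathcal{F}_n\subseteq 2^{[n]}$ such that (a) $\sum_{S\in\mathcal{F}_n}\prod_{i\in S}x_i \in \mathrm{MVNP}$ and (b) every cover of $\mathcal{F}_n$ by balanced combinatorial rectangles uses $2^{\Omega(n)}$ of them.

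For (b) I would use the min-entropy / fat-fooling-set method: construct a probability distribution $\mu$ supported on $\mathcal{F}_n$ such that \emph{every} balanced combinatorial rectangle contained in $\mathcal{F}_n$ has $\mu$-measure at most $2^{-\Omega(n)}$. Then, since each $\mathcal{G}_i *_{\pi_i}\mathcal{H}_i$ is itself such a rectangle, subadditivity gives $1=\mu(\mathcal{F}_n)\leq \sum_i \mu(\mathcal{G}_i *_{\pi_i}\mathcal{H}_i)\leq s^{O(1)}\cdot 2^{-\Omega(n)}$, forcing $s=2^{\Omega(n)}$. The essential point — and the reason Yehudayoff's bound stopped at $2^{\tilde\Omega(\sqrt n)}$ — is that the rectangles in the cover are aligned to \emph{different} balanced partitions $\pi_i$, so a single $\mu$ must simultaneously defeat all balanced partitions, and the quality of the resulting bound is exactly the ``robustness'' of $\mathcal{F}_n$ as the balanced partition varies. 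The whole task thus reduces to finding a family whose robustness is linear rather than square-root in $n$.

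For $\mathcal{F}_n$ I would take a sparse, highly expanding combinatorial structure on $\Theta(n)$ elements: for instance (the edge-sets of) perfect matchings, or spanning trees, of an explicit constant-degree bipartite expander $G$ with $\Theta(n)$ edges; more generally, the solution set of an explicit constraint structure with high vertex expansion. Two things must be checked. First, $\mathrm{MVNP}$-membership: $\sum_{S\in\mathcal{F}_n}\prod_{i\in S}x_i$ is a projection/sub-polynomial of the permanent (or of a counting polynomial over a monotone combinatorial structure), and one verifies membership exactly as such polynomials are shown to lie in $\mathrm{MVNP}$ in Yehudayoff's work — by exhibiting a polynomial-size monotone circuit that, aided by polynomially many Boolean ``certificate'' variables summed over the cube, outputs $\prod_{i\in S}x_i$ with positive multiplicity for each $S\in\mathcal{F}_n$ and nothing else. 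Second, the rectangle bound: by the expansion of $G$, every balanced partition $(X,Y)$ of its $\Theta(n)$ edges must cut through $\Omega(n)$ vertices, and around such a cut a balanced rectangle $\mathcal{G}*_{(X,Y)}\mathcal{H}\subseteq\mathcal{F}_n$ can be compatible with only a restricted set of local completions; decomposing the cut region into $\Omega(n)$ roughly independent local pieces, each contributing a constant-factor deficiency, should yield $\mu$-measure $2^{-\Omega(n)}$ for the rectangle, \emph{uniformly over all balanced partitions} because expansion makes this cut structure unavoidable for every balanced split. (A variant worth trying is to start from Yehudayoff's polynomial and amplify its hardness through an exponential Boolean sum, which is itself the defining $\mathrm{MVNP}$ operation.)

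I expect the main obstacle to be the second check at the strength needed: the classical Jerrum--Snir-type analysis of the permanent gives $2^{\Omega(\sqrt n)}$ in the number of variables precisely because it works with a dense $m\times m$ matrix, and pushing it through for a sparse, expanding structure — while simultaneously keeping the family inside $\mathrm{MVNP}$ — is delicate. In particular one must (i) choose $\mu$ so that the unavoidable $\Omega(n)$-vertex cut of \emph{any} balanced partition translates into a genuine product of $\Omega(n)$ constant-factor losses, and (ii) rule out ``clever'' rectangles that exploit correlations across the cut to evade this product structure. Getting the $2^{-\Omega(n)}$ bound rather than $2^{-\Omega(\sqrt n)}$ hinges on a tight, local-to-global argument of this kind, and that is where I anticipate the technical heart of the proof to lie.
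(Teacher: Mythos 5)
Your plan reduces the theorem to finding a set family $\mathcal{F}_n\subseteq 2^{[n]}$ satisfying (a) $\sum_{S\in\mathcal{F}_n}x^S\in\mathrm{MVNP}$ and (b) every cover of $\mathcal{F}_n$ by balanced combinatorial rectangles has size $2^{\Omega(n)}$. This is precisely the approach that Yehudayoff's earlier work shows is \emph{impossible}, and the obstacle is cited explicitly in the present paper: for any $P\in\mathrm{MVNP}$ there is a polynomial-size monotone circuit computing a polynomial $Q$ with $\Supp(Q)=\Supp(P)$. Combined with the Raz--Yehudayoff decomposition that you correctly invoke, this means that if $\sum_{S\in\mathcal{F}_n}x^S\in\mathrm{MVNP}$ then $\mathcal{F}_n$ automatically has a polynomial-size cover by balanced rectangles. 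So your conditions (a) and (b) are mutually exclusive for every choice of $\mathcal{F}_n$ (matchings, spanning trees, or otherwise), and the subadditivity step ``$1=\mu(\mathcal{F}_n)\le\sum_i\mu(\mathcal{G}_i*_{\pi_i}\mathcal{H}_i)$'' proves a statement whose hypothesis can never be satisfied. The difficulty is not, as you suggest, in tightening the rectangle bound from $2^{-\Omega(\sqrt n)}$ to $2^{-\Omega(n)}$; it is that a support-only (set-cover) argument structurally cannot separate $\mathrm{MVP}$ from $\mathrm{MVNP}$, no matter how strong the rectangle bound is.

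The paper sidesteps this by arguing about coefficient \emph{magnitudes} rather than support. It defines $P_n$ over an expander $G_n$ so that $\coeff(x^S,P_n)=B^{|S|}A^{|E(S)|}$, i.e.\ the coefficient grows with the number of edges of $G_n$ induced by $S$; this $P_n$ is easily seen to be in $\mathrm{MVNP}$. The hard distribution $\mu$ is over vertex sets of random (distance-$3$-separated) induced matchings. The key inequality is not $\mu(\mathcal{R})\le 2^{-\Omega(n)}$ for rectangles $\mathcal{R}\subseteq\Supp(P_n)$, but rather $\mathbb{E}_{S\sim\mu}[\coeff(x^S,g_ih_i)/\coeff(x^S,P_n)]\le 2^{-\Omega(n)}$ for each non-negative product pair $(g_i,h_i)$ with $g_ih_i\le P_n$. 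This is established by a \emph{numerical} fooling-set argument: for two independent samples $S_1,S_2$ split across the pair's partition $(U,V)$, the cross-monomials $x^{R_1\cup T_2}$ and $x^{R_2\cup T_1}$ have much smaller coefficients in $P_n$ (because independent matchings rarely touch, so few new edges are induced), while $\coeff(x^{S_1},gh)\coeff(x^{S_2},gh)=\coeff(x^{R_1\cup T_2},gh)\coeff(x^{R_2\cup T_1},gh)$ by multiplicativity. The coefficient asymmetry, not the support structure, is what makes a $\mathrm{MVNP}$ polynomial hard for monotone circuits. Your proposal needs to be rebuilt around this idea; as written it reproduces the barrier rather than the proof.
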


This theorem bears a similar relation to Yehudayoff's result as some later works~\cite{gashkov-sergeev,RazYehudayoff} bears to the result of Schnorr~\cite{Schnorr}. Schnorr~\cite{Schnorr} proved a lower bound of $\exp(\Omega(\sqrt{n}))$ for an explicit family of polynomials; a similar lower bound was also proved for an explicit family of polynomials by Jerrum and Snir~\cite{JerrumSnir}.\footnote{These explicit polynomials were based on the Clique and the Permanent respectively.} These bounds were strengthened to strongly exponential lower bounds by a series of works of Kuznetsov, Kasim-Zade, and Gashkov in the USSR in the 80s~\cite{Kasim-Zade,Gashkov,gashkov-sergeev}\footnote{Unfortunately, journal versions of these papers are not easily available, but we refer to a survey of Gashkov and Sergeev~\cite{gashkov-sergeev} for a very interesting account of this line of work, along with details of some of these results.}, and independently by a more recent result of Raz and Yehudayoff~\cite{RazYehudayoff}.

\subsection{Proof Outline}

\paragraph{High level idea.} We rely on a connection between monotone algebraic circuit lower bounds and communication complexity that was made explicit by Raz and Yehudayoff~\cite{RazYehudayoff}. As shown in~\cite{RazYehudayoff}, if a multilinear polynomial $P\in \mathbb{R}[x_1,\ldots,x_n]$ has a monotone algebraic circuit of size $s$, then we get a decomposition
\begin{equation}
\label{eq:prod-decomp}
P = \sum_{i=1}^{s} g_i h_i
\end{equation}
where each summand $g_ih_i$ satisfies the property that $g_i$ and $h_i$ are \emph{non-negative} multilinear polynomials that depend on disjoint sets of at least $n/3$ variables each. We call each such term a \emph{non-negative product polynomial}. Thus, to prove a lower bound on the circuit complexity of $P$, it suffices to lower bound the number of terms in any decomposition as in (\ref{eq:prod-decomp}).

As noted by Jerrum and Snir~\cite{JerrumSnir}, one way to do this is via the \emph{support of the polynomial} $P$, by which we mean the set of monomials that have non-zero coefficients in $P$. We think of this set, denoted $\Supp(P)$, as a subset of $2^{[n]}$ by identifying each multilinear monomial on $x_1,\ldots,x_n$ with a subset of $[n]$ in the natural way. Given a decomposition of $P$ into non-negative product polynomials as in (\ref{eq:prod-decomp}), we immediately get $\Supp(P) = \bigcup_{i\in [s]}\Supp(g_i\cdot h_i).$ And so it suffices to obtain a $P$ such that any such decomposition of $\Supp(P)$ must have large size. 

Such decompositions are closely related to a model of communication complexity known as \emph{Multipartition Communication Complexity}, introduced by \v{D}uris, Hromkovi\v{c}, Jukna, Sauerhoff and Schnitger~\cite{DHJSS} (see also the earlier result of Borodin, Razborov and Smolensky~\cite{BRS}). The multipartition communication complexity of a subset $\mc{S}\subseteq 2^{[n]}$ (or equivalently a Boolean function $f:\{0,1\}^n\rightarrow \{0,1\}$)) is defined as follows. We define a \emph{rectangle} $\mc{R}\subseteq 2^{[n]}$ to be any set of the form $\{A\cup B\ |\ A\in \mc{A}, B\in \mc{B}\}$, where $\mc{A} \subseteq 2^Y$ and $\mc{B}\subseteq 2^Z$ and $(Y,Z)$ is a partition of $[n]$. Further, we say that both the partition and the rectangle $\mc{R}$ are \emph{balanced} if $|Y|,|Z|\geq n/3.$ Finally, the multipartition communication complexity of $\mc{S}$ is defined to be $\lceil\log_2 k\rceil$ where $k$ is the smallest integer such that $\mc{S}$ can be decomposed as the union of $k$ many balanced rectangles. 

To see the connection to algebraic complexity, note that if $P\in \mathbb{R}[x_1,\ldots,x_n]$ has monotone algebraic circuits of size $s$, then (\ref{eq:prod-decomp}) implies that $\Supp(P)$ has multipartition communication complexity at most $\lceil\log_2 s\rceil.$ In particular, linear lower bounds in this model for some explicit $\mc{S}$ implies that any non-negative polynomial $P$ with support exactly $\mc{S}$ cannot be computed by monotone algebraic circuits of subexponential size. 

Polynomial (but sublinear) lower bounds for multipartition communication complexity were implicit in the work of Borodin et al.~\cite{BRS} and were extended to linear (but somewhat non-explicit) lower bounds in the work of \v{D}uris et al.~\cite{DHJSS}. An explicit linear lower bound for this model is implicit in a result of Bova, Capelli, Mengel and Slivovsky~\cite{BCMS}. (See also the related work of Hayes~\cite{Hayes}. Similar constructions are attributed to Wigderson in~\cite{RazYehudayoff} and carried out by Jukna~\cite{Jukna-tropical}.) The hard problem of~\cite{BCMS} is quite easy to describe. Fix a regular expander graph\footnote{Recall that we call a family of $d$-regular graphs $(G_n)_{n\geq 1}$ (with $G_n$ a graph on $n$ vertices) an expander sequence if the second largest (in absolute value) eigenvalue of its adjacency matrix $A$ is at most $d(1-\Omega(1)).$ For the problem defined above, take $G = G_n$ in such a sequence.} $G$ on vertex set $[n]$ with constant degree $d$. The associated hard problem is given by taking $\mc{S}$ to be the set of all vertex covers in $G$. Said differently, we consider the Boolean function $f_G(x_1,\ldots,x_n) = \bigwedge_{\{i,j\}\in E(G)} (x_i \vee x_j)$.

As mentioned above, the communication complexity lower bound on $\mc{S}$ immediately yields a strongly exponential lower bound on the monotone algebraic complexity of some explicitly defined polynomial. Unfortunately, as observed by Yehudayoff~\cite{amir}, this does not yield a separation between $\mathrm{MVNP}$ and $\mathrm{MVP}.$ This is because the above argument implies that \emph{any} polynomial $P_0$ that has support $\mc{S}$ requires monotone algebraic circuits of exponential size. Yehudayoff showed that for any polynomial $P$ in $\mathrm{MVNP},$ there is a polynomial-sized monotone algebraic circuit that computes a polynomial $Q$ with the same support. In particular,  the polynomial $P_0$ cannot be in $\mathrm{MVNP}$ as that would contradict our lower bound above. Thus, to obtain a separation between $\mathrm{MVNP}$ and $\mathrm{MVP}$ along these lines, some new idea is necessary. 

We take our cue from the multipartition communication complexity lower bound above, but modify it suitably to obtain a somewhat different lower bound candidate polynomial $P$. Our proof method for the lower bound, as in~\cite{amir}, is not just based on the support of $P$, but rather on the sizes of the coefficients of $P$. We define a probability distribution $\mu$ on the monomials of $P$ and show that for any non-negative product polynomial $g_ih_i$ in a decomposition as in (\ref{eq:prod-decomp}), a random monomial (chosen according to $\mu$) has much smaller coefficient in the product polynomial than in $P$. As the product polynomials sum to $P$, there must be many of them. This yields the lower bound. 

We explain this in some more detail below.

\paragraph{Detailed outline.} The heart of the multipartition communication complexity lower bound for the function $f_G$ is a more standard lower bound for the \emph{non-deterministic communciation complexity} of the \emph{Disjointness} problem. Here, the non-deterministic communication complexity of a function $f$ (or equivalently, the set system $\mc{S}\subseteq 2^{[n]}$ given by $f^{-1}(1)$) is defined in a similar way to multipartition communication complexity, except that each balanced rectangle $\mc{R}$ is defined over the \emph{same equipartition} $(Y,Z)$ of $[n]$, which we can take to be the sets $[n/2]$ and $[n]\setminus [n/2]$ respectively; and the Disjointness function $D(x)$ is defined by the Boolean predicate $\bigwedge_{i\in [n/2]} (x_i\vee x_{i+n/2}).$\footnote{Strictly speaking, the Disjointness function is $\bigwedge_{i\in [m]} (\neg x_i \vee \neg x_{i+n/2})$ but we keep this definition for simplicity.} 

The lower bound for the Disjointness function is proved by a standard \emph{Fooling set} argument (see, e.g.,~\cite{RaoYehudayoff}). We consider the $2^{n/2}\times 2^{n/2}$ \emph{communication matrix} $M$, where the rows and columns are labelled by Boolean settings to variables indexed by $Y$ and $Z$ respectively and the $(i,j)$th entry of $M$ is the disjointness predicate evaluated on the corresponding input. Further, assume that the rows are ordered using the lexicograhical ordering of $\{0,1\}^Y$, and the columns are ordered according to the \emph{reverse} lexicographic ordering of $\{0,1\}^Z$. This ensures that for any $i\in [2^{n/2}]$, the diagonal entry $M(i,i)$ corresponds to an input of the form $(a,\overline{a})$ where $a\in \{0,1\}^Y$ and $\overline{a}$ is the bitwise complement of $a$. From the definition of the Disjointness function, one can check that each diagonal entry of $M$ is $1$; further, given $i\neq j$, either $M(i,j)$ or $M(j,i)$ is $0$. This implies that any rectangle over $(Y,Z)$ that contains the $i$th diagonal entry cannot contain the $j$th diagonal entry for any $j\neq i$. In particular, the number of rectangles required to cover all the diagonal entries is $2^{n/2},$ implying a linear lower bound on the non-deterministic communication complexity of the Disjointness function.

For the multipartition setting, we can follow the above strategy to prove a lower bound for the function $f_G$ defined above. The intuition is that for any graph $G$, the function $f_G$ contains many copies of the Disjointness function above. In particular, taking any induced matching $M$ of size $m$ in $G$ and setting variables corresponding to vertices $i\not\in V(M)$ to $1$, we get a copy $f_M$ of the Disjointness function on $2m$ bits.  Given any rectangle $\mc{R}$ over the partition $(Y,Z)$, one can similarly prove that $\mc{R}$ cannot contain many (suitably defined) ``diagonal entries''  of the communication matrix of $f_M$, \emph{as long as} $M$ contains many (say $\Omega(m)$) edges from the cut defined by $(Y,Z)$ in $G$.  

But there is a subtle question of how to choose $M$ as above. In the multipartition setting, the partition $(Y,Z)$ is not known ahead of time and furthermore, each rectangle comes with its own underlying partition. This is where the expanding nature of the graph $G$ comes in. Standard facts about expander graphs imply that given any balanced partition $(Y,Z)$ (i.e. $|Y|,|Z|\geq n/3$), a constant fraction of the edges of $G$ lie in the cut defined by $(Y,Z)$. In particular, choosing $M$ \emph{randomly} guarantees that many edges of $M$ lie in the cut with high probability. This leads to a proof of the multipartition commmunication complexity lower bound.

We now describe how this connects to the lower bounds of this paper for monotone algebraic circuits. We will follow a similar strategy, but instead of the $0$s and $1$s of the Boolean predicate, we will analyze the coefficients of the multilinear monomials in $P$ and in the terms of the decomposition in (\ref{eq:prod-decomp}). The polynomial $P$ is defined using an expander graph $G$ on vertex set $[n]$ (let us skip over what the definition of $P$ is for the moment) and the hard distribution $\mu$ over the monomials of $P$ is again just the process of choosing a random induced matching\footnote{For some technical reasons, we will actually choose $M$ so that the non-adjacent vertices of $M$ are at distance at least $3$ from each other. But this can be ignored for now.} $M$ of size $m$ in $G$ and considering the monomial $\prod_{i\in V(M)}x_i$.

The proof of the lower bound then proceeds as follows. Assume that $P$ has a circuit of size $s$ and consider the decomposition given in (\ref{eq:prod-decomp}). Given a term $g_ih_i$ of the decomposition, we get a balanced partition $(Y_i,Z_i)$ of the underlying variable set $x_1,\ldots,x_n.$ We argue that for a random monomial $\mon$ chosen according to the distribution $\mu,$ the expected value of the coefficient of $\mon$ in $g_ih_i$ is much smaller than its coefficient in $P$. To do this, we use a numerical analogue of the fooling set technique outlined above. Again, we consider the ``communication matrix'' $M$, which now is a $2^{|Y_i|}\times 2^{|Z_i|}$ matrix whose rows and  columns are labelled by multilinear monomials in $Y_i$ and $Z_i$ respectively, and such that the entry corresponding to monomials $(\mon_1,\mon_2)$ is the coefficient of the product monomial $\mon_1\cdot \mon_2$ in $P$. The main technical part of the proof shows the following: for independently sampled monomials $\mon'$ and $\mon''$ (chosen from distribution $\mu$) that factor as $\mon_1'\cdot \mon_2'$ and $\mon_1''\cdot \mon_2''$ respectively, where $\mon_1',\mon_1''$ are monomials over $Y_i$ and $\mon_2',\mon_2''$ are monomials over $Z_i$, the coefficients of the ``cross monomials'' $\hat{\mon} := \mon_1'\cdot \mon_2''$ and $\tilde{\mon} := \mon_1''\cdot \mon_2'$ in $P$ are much smaller than the coefficients of $\mon'$ and $\mon''$ in $P$.  This immediately implies that the coefficients of $\mon'$ and $\mon''$ in $g_ih_i$ are smaller than they are in $P$ by the following simple argument. If we let $\coeff(\mon, Q)$ denote the coefficient of monomial $\mon$ in a polynomial $Q$, then we see that
\begin{align*}
\coeff(\mon',g_ih_i)\cdot \coeff(\mon'',g_ih_i) &= \coeff(\mon_1',g_i) \coeff(\mon_2',h_i)\coeff(\mon_1'',g_i)\coeff(\mon_2'',h_i)\\
&= \coeff(\hat{\mon},g_ih_i) \cdot \coeff(\tilde{\mon},g_ih_i).
\end{align*}
The latter term is upper bounded by the product of the coefficients of the monomials $\hat{\mon}$ and $\tilde{\mon}$ in $P$ (because of the decomposition  (\ref{eq:prod-decomp})), which we already argued are much smaller than the coefficients of $\mon'$ and $\mon''$ in $P$. This implies that a randomly chosen monomial $\mon$ has much smaller coefficient in any product term $g_ih_i$ than in $P$. Therefore, there must be many such terms in the decomposition (\ref{eq:prod-decomp}). This implies the lower bound.

The above outline also indicates the property of $P$ that allows the lower bound proof to work: we would like that the coefficients of $\mon'$ and $\mon''$ are much larger than those of the monomials $\hat{\mon}$ and $\tilde{\mon}$. We do this by designing a polynomial $P$ in $\mathrm{MVNP}$ where the coefficients of any monomial $\prod_{i\in S} x_i$ grows with the number of edges in the subgraph of $G$ induced by the set $S$. Recall that for a random monomial $\mon$ chosen according to $\mu$, $S$ is the vertex set of a matching of size $m$ and hence this induced subgraph has $m$ edges. However, if $m$ is sufficiently smaller than $n$ (say $m\leq \alpha n$ for a small enough $\alpha > 0$), we do not expect the vertex sets of two independently chosen matchings of size $m$ to have too many edges between them. This is what allows us to bound the coefficients of $\hat{\mon}$ and $\tilde{\mon}$, and prove the lower bound as above.

\section{Defining the hard polynomial}

\paragraph{Notation.} Throughout, let $n\geq 1$ be a growing integer parameter. Let $X = \{x_1,\ldots,x_n\}$ be a set of indeterminates. We use $x^S$ to denote the monomial $\prod_{i\in S}x_i.$ Given a polynomial $P\in \mathbb{R}[x_1,\ldots,x_n]$ and $S\subseteq [n]$, we use $\coeff(x^S,P)$ to denote the coefficient of the monomial $x^S$ in the polynomial $P$. 

Let $(G_n)_{n > d}$ be an explicit sequence of $d$-regular expander graphs on $n$ vertices with second largest eigenvalue at most $d^{0.75}.$ Here, $d$ is a large enough constant as specified below. Such an explicit sequence of expander graphs can be constructed using, say,~\cite{RVW}. The only fact we will use about expanders is the following, which is an easy consequence of the Expander Mixing Lemma~\cite{AC} (see also~\cite[Lemma 2.5]{HLW}).

For any pair of \emph{disjoint} sets $U,V\subseteq V(G_n)$, we use $E(U,V)$ to denote the set of edges $\{u,v\}\in E(G_n)$ such that $u\in U$ and $v\in V$. Also, let $E(U)$ denote the set of edges $e = \{u,v\}\in E(G_n)$ such that $u,v\in U$.

\begin{lemma}[Corollary to Expander Mixing Lemma]
\label{lem:eml}
Let $G_n$ be as above. Then, for any disjoint sets $U,V\subseteq [n]$ such that $|U|,|V|\in [n/3,2n/3]$, we have 
\[
|E(U,V)|\geq \frac{|E(G_n)|}{10}.
\]
as long as $d$ is a large enough constant. 
\end{lemma}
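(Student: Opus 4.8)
The plan is to derive this from the Expander Mixing Lemma (EML), which states that for a $d$-regular graph $G_n$ on $n$ vertices whose second-largest eigenvalue (in absolute value) is $\lambda$, and for any two disjoint sets $U,V \subseteq [n]$,
\[
\left| |E(U,V)| - \frac{d\,|U|\,|V|}{n} \right| \leq \lambda \sqrt{|U|\,|V|}.
\]
Here $|E(G_n)| = dn/2$ and $\lambda \leq d^{0.75}$. First I would plug in the bounds $|U|,|V| \in [n/3,\,2n/3]$. The ``main term'' $d|U||V|/n$ is then at least $d \cdot (n/3)^2 / n = dn/9$, while the ``error term'' $\lambda\sqrt{|U||V|}$ is at most $d^{0.75}\sqrt{(2n/3)^2} = d^{0.75}\cdot (2n/3)$. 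So
\[
|E(U,V)| \geq \frac{dn}{9} - \frac{2 d^{0.75} n}{3} = dn\left(\frac{1}{9} - \frac{2}{3 d^{0.25}}\right).
\]
For $d$ large enough (concretely, $d^{0.25} \geq 60$, i.e. $d \geq 60^4$, suffices with room to spare), the bracket exceeds $1/10 \cdot (d/2) / d = 1/20$... let me be a little more careful: we want $dn(1/9 - 2/(3d^{0.25})) \geq dn/20$, i.e. $1/9 - 1/20 \geq 2/(3d^{0.25})$, i.e. $11/180 \geq 2/(3d^{0.25})$, i.e. $d^{0.25} \geq 120/11 \approx 10.9$, so $d \geq 11^4 = 14641$ comfortably works. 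Since $|E(G_n)| = dn/2$, we get $|E(U,V)| \geq dn/20 = |E(G_n)|/10$, as claimed.

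Carrying this out in order: (i) recall/cite the precise statement of EML with the $\lambda\sqrt{|U||V|}$ error term; (ii) substitute $\lambda \leq d^{0.75}$ and the size constraints to lower-bound the main term and upper-bound the error term; (iii) combine and choose the constant $d$ large enough (independent of $n$) so that the main term dominates by the required factor, concluding $|E(U,V)| \geq |E(G_n)|/10$. There is essentially no obstacle here: the only thing to watch is that the hypothesis only guarantees $|U|,|V| \geq n/3$ (the upper bound $2n/3$ is automatic when $U,V$ are disjoint, but is stated for convenience), and that the chosen threshold for $d$ is consistent with whatever other ``$d$ large enough'' requirements appear elsewhere in the paper — so I would phrase the conclusion as holding for all sufficiently large constant $d$ rather than pinning down an exact value. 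The slack between $dn/9$ (main term, quadratic in the set sizes) and $O(d^{0.75}n)$ (error term, linear after the square root) is the entire content, and it is robust.
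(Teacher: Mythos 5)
Your proof is correct and is precisely the argument the paper has in mind: the paper states Lemma~\ref{lem:eml} without proof, simply noting it is ``an easy consequence of the Expander Mixing Lemma,'' and your derivation — lower-bounding $d|U||V|/n$ by $dn/9$, upper-bounding the error by $2d^{0.75}n/3$ using $\lambda\le d^{0.75}$, and taking $d$ a large enough constant so that the main term wins by the required margin — is exactly that consequence, carried out correctly.
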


From now on, $d$ will be fixed to be a large enough constant so that the inequality in Lemma~\ref{lem:eml} holds. 

We define the polynomial $P_n(x_1,\ldots,x_n)$ as follows. We assume that $V(G_n) = [n]$. For each edge $e\in E(G_n)$ introduce a variable $x'_e$ and let $X' = \{x'_e\ |\ e\in E(G_n)\}.$ Notice that for each Boolean assignment to the variables in $X'$, we obtain a subgraph $H$ of $G_n$. In particular, if the variables in $X'$ are set \emph{randomly} to Boolean values, we get a random subgraph $H$ of $G_n$ with the same vertex set $[n]$. We use $\deg_H(i)$ to denote the degree of the vertex $i$ in the graph $H$. 

We now define
\begin{align}
P_n(x_1,\ldots,x_n) &= \avg{\substack{x'_e\in \{0,1\}\\ \forall e\in E(G_n)}}{\prod_{i\in [n]}\left(1+x_i\cdot 2^{\deg_{H}(i)}\right)}\label{eq:defP.1}\\
 &= \avg{\substack{x'_e\in \{0,1\}\\ \forall e\in E(G_n)}}{\sum_{S\subseteq [n]}x^S\cdot 2^{\sum_{i\in S}\deg_{H}(i)}}\label{eq:defP.2}
\end{align} 
where the variables $x'_e$ are set to one of $\{0,1\}$ independently and uniformly at random.

\begin{lemma}
\label{lem:ubd}
The sequence of polynomials $P_n$ as defined above is in $\mathrm{MVNP}$. 
\end{lemma}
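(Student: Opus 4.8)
The plan is to exhibit $P_n$ directly as an exponential Boolean sum over an explicit polynomial computable by polynomial-size monotone circuits, matching the definition of $\mathrm{MVNP}$ with the $m = |E(G_n)| = dn/2$ summed variables being exactly the edge variables $(x'_e)_{e\in E(G_n)}$ that already appear in (\ref{eq:defP.1}).

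The key observation is that although $2^{\deg_H(i)}$ looks like a non-polynomial (exponential) quantity, on Boolean assignments to the edge variables it coincides with a genuine polynomial of small degree. Indeed $\deg_H(i) = \sum_{e\ni i} x'_e$, so $2^{\deg_H(i)} = \prod_{e\ni i} 2^{x'_e}$, and for $x'_e\in\{0,1\}$ we have $2^{x'_e} = 1 + x'_e$; hence on every Boolean assignment to $X'$,
\[
2^{\deg_H(i)} = \prod_{e\ni i}(1 + x'_e).
\]
I therefore define the polynomial
\[
R\bigl(x_1,\ldots,x_n, (x'_e)_{e\in E(G_n)}\bigr) := \frac{1}{2^{|E(G_n)|}}\prod_{i\in[n]}\Bigl(1 + x_i\cdot\prod_{e\ni i}(1+x'_e)\Bigr)
\]
in $n + |E(G_n)|$ variables. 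By the displayed identity, for every $x' \in \{0,1\}^{E(G_n)}$ the value $R(x, x')$ equals $2^{-|E(G_n)|}\prod_{i\in[n]}(1 + x_i\, 2^{\deg_H(i)})$, so summing over all $x'\in\{0,1\}^{E(G_n)}$ and comparing with (\ref{eq:defP.1}) gives exactly $P_n(x) = \sum_{x'\in\{0,1\}^{E(G_n)}} R(x,x')$, which is an exponential Boolean sum of the required form.

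It then remains to check that $R$ (extended to a sequence $(R_r)_{r\geq 1}$ by trivial padding at indices not of the form $n + |E(G_n)|$) lies in $\mathrm{MVP}$. All constants in the expression for $R$ are $0$, $1$, or the non-negative scalar $2^{-|E(G_n)|}$, so a circuit built from this expression is monotone, and $R$ has non-negative coefficients. A circuit of size $O(dn) = O(n)$ suffices: compute $1+x'_e$ for each edge; then $\prod_{e\ni i}(1+x'_e)$ for each vertex using at most $d-1$ products per vertex; then $1 + x_i\cdot(\text{that product})$ for each vertex; then the product over $i\in[n]$; and finally multiply by the constant $2^{-|E(G_n)|}$. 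The degree of $R$ is $n(d+1) = O(n)$, which is polynomially bounded, and $m = |E(G_n)| = dn/2$ is linear in $n$, hence also polynomially bounded. This verifies every requirement in the definition of $\mathrm{MVNP}$.

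There is no serious obstacle here; the only point requiring care is the substitution of the polynomial $\prod_{e\ni i}(1+x'_e)$ for the exponential $2^{\deg_H(i)}$, which is legitimate precisely because the edge variables are summed only over Boolean values, together with the trivial remark that the normalizing factor $2^{-|E(G_n)|}$ is a non-negative constant and therefore harmless for monotonicity.
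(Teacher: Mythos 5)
Your proof is correct and takes essentially the same approach as the paper: both hinge on replacing $2^{\deg_H(i)} = 2^{\sum_{e\ni i}x'_e}$ with the polynomial $\prod_{e\ni i}(1+x'_e)$ (which the paper writes in expanded form as $\sum_{S\subseteq[d]}\prod_{j\in S}x'_j$), a substitution valid on Boolean assignments. Your write-up is somewhat more explicit than the paper's about the circuit construction, the non-negativity of the constant $2^{-|E(G_n)|}$, and the degree and parameter bounds, but the key observation is identical.
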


\begin{proof}
Using (\ref{eq:defP.1}), we see that 
\[
P_n(x_1,\ldots,x_n)  = \frac{1}{2^{|E(G_n)|}}\sum_{x'_e\in \{0,1\}: e\in E(G)} \prod_{i\in [n]}\left(1+x_i\cdot 2^{\sum_{e\ni i} x'_e}\right).
\]
Since $G_n$ is $d$-regular, it suffices to show that each function $f:\{0,1\}^d\rightarrow \mathbb{R}$ defined by $f(x'_1,\dots,x'_d) = 2^{\sum_{j\in [d]} x'_j}$ can be represented by a constant-sized polynomial over $x'_1,\ldots,x'_d$ with \emph{non-negative} coefficients. 

But this is clear since $f(x_1',\ldots,x_d') = \sum_{S\subseteq [d]}\prod_{i\in S} x_i'.$
\end{proof}

%\begin{remark}
%\label{rem:2toc}
%The above lemma also holds if we change the definition of $P_n$ in (\ref{eq:defP.1}) with the constant $2$ replaced by any fixed $c > 1$.
%\end{remark}

\section{The lower bound}

The main theorem of this section is the following. 

\begin{theorem}
\label{thm:lbd}
Any monotone circuit computing $P_n$ has size $2^{\Omega(n)}.$
\end{theorem}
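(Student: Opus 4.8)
The plan is to follow the communication-complexity/fooling-set strategy sketched in the proof outline, but now carried out numerically on the coefficients of $P_n$ rather than on the $0/1$ entries of a Boolean communication matrix. Assume $P_n$ has a monotone circuit of size $s$; by the Raz--Yehudayoff decomposition \eqref{eq:prod-decomp} we may write $P_n = \sum_{i=1}^{s} g_i h_i$ where each $g_i,h_i$ is a non-negative multilinear polynomial and the two depend on disjoint variable sets indexed by a \emph{balanced} partition $(Y_i,Z_i)$ of $[n]$ (each side of size at least $n/3$, hence at most $2n/3$, so Lemma~\ref{lem:eml} applies to it). We fix a matching size $m = \alpha n$ for a small enough constant $\alpha>0$, and define the distribution $\mu$ on monomials by picking a uniformly random \emph{induced} matching $M$ of $G_n$ of size $m$ whose edges are pairwise at distance $\geq 3$, and outputting the monomial $x^{V(M)}$. (One first checks such matchings exist in abundance in a constant-degree expander: a greedy / alteration argument shows there are at least $2^{\Omega(n)}$ of them, so $\mu$ is well-defined and spread out.) The first computational step is to pin down the coefficients: from \eqref{eq:defP.2}, $\coeff(x^S,P_n) = \avg{H}{2^{\sum_{i\in S}\deg_H(i)}}$, and since the edge variables are independent this expectation factorizes over edges, giving a clean closed form $\coeff(x^S,P_n) = \prod_{e\in E(G_n)} w_e(S)$ where $w_e(S) = \frac12(1 + 2^{|e\cap S|})$, so that $\coeff(x^S,P_n) = (3/2)^{|E(S,[n]\setminus S)|}\cdot 3^{|E(S)|}\cdot 1^{(\cdots)}$ up to the normalization — i.e. the coefficient of $x^S$ grows like $c^{|E(G_n[S])|}$ (times a boundary term) for a constant $c>1$, which is exactly the ``coefficient grows with the number of induced edges'' property advertised in the outline.

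The heart of the argument is the numerical fooling-set estimate. Fix a term $g_ih_i$ with balanced partition $(Y,Z):=(Y_i,Z_i)$. Draw $\mon' = x^{S'}$ and $\mon'' = x^{S''}$ independently from $\mu$, coming from matchings $M', M''$; factor $S' = S_1' \sqcup S_2'$ and $S'' = S_1'' \sqcup S_2''$ with $S_1',S_1''\subseteq Y$ and $S_2',S_2''\subseteq Z$, and form the cross sets $\hat S = S_1' \cup S_2''$, $\tilde S = S_1'' \cup S_2'$. The key claim is that with probability $\geq 1/2$ over the choice of $M', M''$,
\[
\coeff(x^{\hat S},P_n)\cdot\coeff(x^{\tilde S},P_n) \;\leq\; 2^{-\Omega(n)}\cdot \coeff(x^{S'},P_n)\cdot\coeff(x^{S''},P_n).
\]
Using the closed form above, after taking $\log_c$ this reduces to a purely combinatorial inequality: $|E(G_n[\hat S])| + |E(G_n[\tilde S])|$ is smaller than $|E(G_n[S'])| + |E(G_n[S''])| = 2m$ by an additive $\Omega(n)$. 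Now $E(G_n[S'])$ — the $m$ matching edges of $M'$ — splits, relative to $(Y,Z)$, into edges inside $Y$, edges inside $Z$, and edges crossing the cut; the edges contained on one side of the cut survive in $\hat S$ or $\tilde S$, but every crossing edge of $M'$ (and of $M''$) is \emph{destroyed}. Since $M'$ is a random size-$m$ matching and $(Y,Z)$ is a balanced partition of the expander, Lemma~\ref{lem:eml} (applied after noting a random matching touches the cut proportionally) guarantees that $\Omega(m) = \Omega(n)$ edges of $M'$ cross the cut with probability $\geq 3/4$ — this is a standard concentration argument using that the $m$ edges are essentially independent because of the distance-$\geq 3$ condition. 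So $\hat S, \tilde S$ each lose $\Omega(n)$ of the ``old'' edges; the only way the inequality could fail is if these are compensated by \emph{new} edges of $G_n$ between $S_1'$ and $S_2''$ (and between $S_1''$ and $S_2'$) that were not present in $M'$ or $M''$ — but $S'$ and $S''$ come from two independent sparse random matchings, so the expected number of $G_n$-edges between $V(M')$ and $V(M'')$ is $O(m^2/n) = O(\alpha m)$, which for $\alpha$ small enough is a tiny fraction of the $\Omega(m)$ crossing edges we destroyed, and Markov makes this hold with probability $\geq 3/4$. Union-bounding the two bad events proves the claim.

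Given the claim, the rest is the short ``exchange'' computation from the outline: because $g_i,h_i$ depend on disjoint variable sets aligned with $(Y,Z)$, multiplicativity of coefficients over a product of variable-disjoint polynomials gives $\coeff(\mon',g_ih_i)\cdot\coeff(\mon'',g_ih_i) = \coeff(x^{\hat S},g_ih_i)\cdot\coeff(x^{\tilde S},g_ih_i)$, and the right side is at most $\coeff(x^{\hat S},P_n)\cdot\coeff(x^{\tilde S},P_n)$ since $g_ih_i$ is one non-negative summand of $P_n$ (all coefficients non-negative). Combining with the claim and taking expectations carefully (e.g.\ via Cauchy--Schwarz to go from the product bound on independent $\mon',\mon''$ to a bound on $\avg{\mon}{\coeff(\mon,g_ih_i)}$), we get $\avg{\mon\sim\mu}{\coeff(\mon,g_ih_i)} \leq 2^{-\Omega(n)}\cdot \avg{\mon\sim\mu}{\coeff(\mon,P_n)}$ for every $i$. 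Summing over $i\in[s]$ and using $P_n = \sum_i g_ih_i$ forces $s\geq 2^{\Omega(n)}$, which is Theorem~\ref{thm:lbd}. The main obstacle I anticipate is making the combinatorial estimate on crossing edges fully rigorous: one must handle the conditioning inherent in ``random induced matching of fixed size'' (the edges are not literally independent), control the lower-order boundary contributions in the closed form for $\coeff(x^S,P_n)$, and choose $\alpha$ and the distance-$3$ spacing so that the $\Theta(m^2/n)$ cross-edge term is genuinely dominated by the $\Omega(m)$ destroyed-edge term uniformly over \emph{all} balanced partitions simultaneously — the last point is exactly where expansion (Lemma~\ref{lem:eml}), rather than a single fixed partition, is essential.
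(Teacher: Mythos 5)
Your high-level strategy matches the paper's exactly: the Raz--Yehudayoff decomposition, the closed form $\coeff(x^S,P_n)=B^{|S|}A^{|E(S)|}$ (your factorization over edges is correct; the paper's Lemma~\ref{lem:Palternate} gives $B=(3/2)^d$, $A=10/9$), the random distance-$3$-separated matching as the hard distribution, and the ``cross monomial'' exchange identity $\coeff(\mon',gh)\coeff(\mon'',gh)=\coeff(\hat{\mon},gh)\coeff(\tilde{\mon},gh)$. The skeleton is right.

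The gap is in the quantitative step, and it is real. You establish that the key inequality on $\coeff(x^{\hat S},P_n)\coeff(x^{\tilde S},P_n)$ holds \emph{with probability $\geq 1/2$}, using Markov to control the number $N=|E(R_1,T_2)|+|E(R_2,T_1)|$ of ``new'' cross edges. That is not enough. The argument needs to bound the \emph{expectation} $\avg{M',M''}{A^{N-D}}$ where $D$ is the number of destroyed cut edges, and $A^{N-D}$ can be as large as $A^{\Theta(dm)}=\exp(\Theta(n))$: an event of constant probability $1/2$ on which the bound fails contributes $\tfrac{1}{2}A^{\Theta(n)}$ to the expectation, which completely swamps the intended $\exp(-\Omega(n))$. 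Markov only gives $\Pr[N\geq 4\,\avg{}{N}]\leq 1/4$; to make the fooling-set computation close, you need either an exponentially small failure probability for $\{N\gtrsim \gamma m\}$ or, equivalently, a bound on the \emph{exponential moment} $\avg{}{A^{cN}}\leq A^{O(\alpha m)}$. The paper proves exactly such a moment bound (Lemma~\ref{lem:propsS}, item~3) via the sequential structure of the sampler: conditioned on the previously chosen edges, each new edge of $M_2$ lands near $S_1$ with probability at most $4\alpha d$, which gives a binomial-type tail $\Pr[N\geq i]\leq\binom{m/2}{i}(4\alpha d)^i$ and hence $\avg{}{A^{4N}}\leq(1+4\alpha d A^4)^{m/2}$. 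There is also a second, more bookkeeping-level issue you pass over: once you square $C=\avg{}{\coeff(x^S,gh)}$ and compare with $P_n$, the exponent $B^{|R_1|+|T_2|+|R_2|+|T_1|}A^{|E(R_1)|+|E(T_1)|+|E(R_2)|+|E(T_2)|}$ has to match the ``destroyed'' count deterministically; the paper handles this by conditioning on the triple $(i,j,k)=(|M\cap E(U,V)|,|M\cap E(U)|,|M\cap E(V)|)$ and distinguishing heavy versus light triples (the Bayes step). Your suggestion to replace this with an unconditioned Cauchy--Schwarz argument (bounding $\avg{}{A^{2N}}$ and $\avg{}{A^{-2D}}$ separately) could plausibly work as an alternative, but it too requires the exponential moment bound on $N$, not Markov, so it does not rescue the argument as written.
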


We need the following lemma from~\cite{RazYehudayoff}. We say that a pair of multilinear polynomials $(g,h)\in \mathbb{R}[X]$ form a \emph{non-negative product pair} if $g,h$ are polynomials with non-negative coefficients, and there is a partition of $X = Y\cup Z$ where $n/3\leq |Y|,|Z|\leq 2n/3$ and $g\in \mathbb{R}[Y], h\in \mathbb{R}[Z].$

\begin{lemma}[\cite{RazYehudayoff}, Lemma 3.3]
\label{lem:amir}
Assume that $P_n$ has a monotone circuit of size $s$. Then 
\[
P_n(X) = \sum_{i=1}^{s+1} g_ih_i
\]
where for each $i\in [s]$, $(g_i,h_i)$ forms a non-negative product pair. 
\end{lemma}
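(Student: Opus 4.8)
The plan is to prove this standard structural fact by an iterative ``peeling'' argument: repeatedly split a single non-negative product pair off of the remaining polynomial while shrinking the underlying monotone circuit by at least one gate. I assume throughout (at the usual constant-factor cost, irrelevant to the asymptotics of Theorem~\ref{thm:lbd}) that the circuit has fan-in two; for a gate $v$ I write $X_v$ for the set of variables that the polynomial $[v]$ computed at $v$ depends on, and $n_v = |X_v|$. I would maintain the invariant
\[
P_n = \sum_{j} g_j h_j + Q,
\]
where each recorded $g_j h_j$ is a non-negative product pair, $Q\in\mathbb{R}[X]$ is multilinear with non-negative coefficients, and $Q$ has a monotone circuit with at most $s - (\text{number of recorded pairs})$ gates; initially $Q = P_n$ and nothing is recorded. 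Note $P_n$ is multilinear.

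Each loop step, run while $n_Q > 2n/3$, first locates a gate $u$ at which to cut. Starting from the output gate, walk down the current circuit, always moving to the child whose polynomial depends on more variables (ties broken arbitrarily). Since the variable set at a gate is contained in the union of those of its two children, the quantity $n_{(\cdot)}$ is non-increasing along this path and drops by at most a factor of two per step; it equals $n_Q > 2n/3$ at the output and is at most $1$ at the terminal leaf. Hence, taking $v$ to be the last gate on the path with $n_v > 2n/3$, its path-child $u$ satisfies $n_u \le 2n/3$ and $n_u \ge n_v/2 > n/3$, so $n_u \in (n/3, 2n/3]$ (in particular $u$ is an internal gate, for $n$ large). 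Replace the gate $u$ by a fresh input variable $y$; the resulting circuit is still monotone and computes some $\hat Q(x,y)$ with $Q(x) = \hat Q(x, [u](x))$, where $[u]$ has non-negative coefficients and variable set $X_u$.

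The heart of the matter is that multilinearity of $Q$, plus the absence of cancellations, forces $\hat Q$ to be very rigid. If $\hat Q$ had a monomial divisible by $y^2$, then, since $[u]$ depends on some $x_i$ and hence has a monomial divisible by $x_i$, substituting $y\mapsto[u]$ would create in $Q$ a monomial divisible by $x_i^2$ with strictly positive coefficient (no cancellation), contradicting multilinearity; thus $\hat Q(x,y) = A(x) + y\cdot B(x)$ is affine in $y$, with $A,B$ non-negative. The same reasoning applied to $Q = A + [u]\cdot B$ shows $B$ cannot depend on any variable of $X_u$ (else $[u]\cdot B$, hence $Q$, has a squared variable). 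So $\mathrm{var}(B)\subseteq[n]\setminus X_u$, whence $|\mathrm{var}(B)|\le n - n_u < 2n/3$; together with $|X_u|\le 2n/3$ and the disjointness of the non-negative polynomials $[u]$ and $B$, this makes $([u],B)$ a non-negative product pair (extend $X_u$ and $\mathrm{var}(B)$ to a balanced partition of $[n]$). I then record the pair $[u]\cdot B$ and set $Q := A = \hat Q(x,0)$, which is again multilinear (its monomials are among those of the old $Q$) and is computed by the monotone circuit obtained by deleting the internal gate $u$ (and any gates that become unreachable), which has strictly fewer gates.

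Since each iteration removes at least one gate from a circuit of size at most $s$, the loop runs at most $s$ times and produces at most $s$ product pairs; at termination $n_Q\le 2n/3$, and I place $Q$ in the final slot as $1\cdot Q$, which is permitted since the $(s+1)$-st term is unconstrained (padding with zero terms if fewer pairs were produced) to get $P_n=\sum_{i=1}^{s+1}g_ih_i$. I expect the delicate point to be the choice of the cut gate $u$: it must simultaneously have few enough essential variables ($\le 2n/3$, so $[u]$ fits on one side of a balanced partition) and enough of them ($> n/3$, so the cofactor $B$, which can only use the remaining variables, fits on the other side). This two-sided control is exactly what the ``heaviest-child path plus halving'' argument delivers, and it is the one place the fan-in-two normalization is used.
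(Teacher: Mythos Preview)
The paper does not prove this lemma; it simply cites \cite{RazYehudayoff}. Your argument is correct and is essentially the standard proof of this structural fact (going back to Hyafil and to Valiant--Skyum--Berkowitz--Rackoff for general circuits, specialized here to the monotone/multilinear setting as in \cite{RazYehudayoff}): find a gate $u$ with $n/3<n_u\le 2n/3$ via the heaviest-child walk, write $Q=A+[u]\cdot B$, and recurse on $A$.

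Two small remarks. First, the definition of a non-negative product pair requires $g,h$ to be \emph{multilinear}; you check variable-disjointness of $[u]$ and $B$ but not their multilinearity. This follows by exactly the same no-cancellation argument you already use (if $[u]$ had a monomial divisible by $x_i^2$ and $B\neq 0$, then $[u]\cdot B$, hence $Q$, would not be multilinear), so it is only a one-line addition. Second, your final term $1\cdot Q$ is in fact also a non-negative product pair once $n_Q\le 2n/3$ (take $Z\supseteq\mathrm{var}(Q)$ with $|Z|\in[n/3,2n/3]$), so you actually prove the slightly cleaner statement with all $s+1$ terms constrained; your use of the ``unconstrained $(s{+}1)$-st slot'' as stated in the lemma is of course also fine.
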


\begin{corollary}
\label{cor:amir}
Assume that $P_n$ has a monotone circuit of size $s$. Let $\mu$ be any probability distribution on subsets $S\subseteq [n]$. Then, there is a non-negative product pair $(g,h)$ such that
\begin{itemize}
\item $gh\leq P$, i.e., $\coeff(x^S,gh)\leq \coeff(x^S,P_n)$ for each $S\subseteq [n]$, 
\item $\avg{S\sim \mu}{\coeff(x^S,gh)/\coeff(x^S,P_n)}\geq 1/(s+1).$ (The  quantity $\coeff(x^S,gh)/\coeff(x^S,P_n)$ is well defined since by (\ref{eq:defP.2}), the denominator is non-zero for all $S\subseteq [n]$.)
\end{itemize}
\end{corollary}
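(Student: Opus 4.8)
The plan is to derive Corollary~\ref{cor:amir} from Lemma~\ref{lem:amir} by a straightforward averaging argument. Starting from the decomposition $P_n = \sum_{i=1}^{s+1} g_i h_i$ into non-negative product pairs, the first observation is that each summand satisfies $g_i h_i \leq P_n$ coefficient-wise. This is because all the $g_i, h_i$ have non-negative coefficients, so every $\coeff(x^S, g_j h_j)$ is non-negative, and hence $\coeff(x^S, g_i h_i) \leq \sum_{j=1}^{s+1}\coeff(x^S, g_j h_j) = \coeff(x^S, P_n)$. This establishes the first bullet for \emph{every} index $i$, so in particular for the one we will eventually select.

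Next I would use the fact that the coefficients add up exactly. For any fixed $S \subseteq [n]$ we have $\sum_{i=1}^{s+1} \coeff(x^S, g_i h_i) = \coeff(x^S, P_n)$, and since (\ref{eq:defP.2}) guarantees $\coeff(x^S,P_n) > 0$ for all $S$ (the $S = \emptyset$ term and every other term contributes a strictly positive average of powers of two), we may divide through to get
\[
\sum_{i=1}^{s+1} \frac{\coeff(x^S, g_i h_i)}{\coeff(x^S, P_n)} = 1
\]
for every $S$. Now take expectations over $S \sim \mu$ and swap the (finite) sum with the expectation:
\[
\sum_{i=1}^{s+1} \avg{S\sim\mu}{\frac{\coeff(x^S, g_i h_i)}{\coeff(x^S, P_n)}} = 1.
\]
Since this is a sum of $s+1$ non-negative terms equal to $1$, by averaging at least one index $i^*$ has $\avg{S\sim\mu}{\coeff(x^S, g_{i^*} h_{i^*})/\coeff(x^S, P_n)} \geq 1/(s+1)$. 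Setting $(g,h) = (g_{i^*}, h_{i^*})$ gives a non-negative product pair satisfying both bullets.

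There is essentially no obstacle here; the only points requiring a word of care are the strict positivity of $\coeff(x^S, P_n)$ (needed to make the ratios well defined, and already flagged parenthetically in the statement) and the legitimacy of exchanging the finite sum and the expectation (immediate by linearity). One might also note that Lemma~\ref{lem:amir} produces $s+1$ summands of which only the first $s$ are asserted to be non-negative product pairs; in the worst case the $(s+1)$-st term could be, say, identically zero or a single monomial, but since we only need \emph{one} index $i^*$ achieving the averaged bound and the total is $1$ over $s+1$ terms, we can harmlessly restrict attention to whichever valid non-negative product pair among the first $s$ carries the largest share — if the exceptional term carries a positive share we simply enlarge the denominator estimate, which only helps. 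Thus the bound $1/(s+1)$ stated in the corollary is safe.
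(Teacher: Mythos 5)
Your argument is the same averaging argument the paper gives: decompose $P_n$ via Lemma~\ref{lem:amir}, observe the ratios sum to $1$ pointwise, take expectations over $S\sim\mu$, and pick the index $i^*$ whose averaged ratio is at least $1/(s+1)$; the coefficient-wise bound $g_{i^*}h_{i^*}\le P_n$ follows from non-negativity of all summands, exactly as you say.

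One remark on your final paragraph. You are right to notice that Lemma~\ref{lem:amir} as printed only asserts the product-pair property for $i\in[s]$ even though the sum runs to $s+1$, but your proposed fix does not actually close that gap: if the $(s+1)$-st summand were allowed to be arbitrary and it carried share $p$ of the mass, the best of the first $s$ terms is only guaranteed $(1-p)/s$, which can fall below $1/(s+1)$ once $p>1/(s+1)$, and ``enlarging the denominator estimate'' does not repair this. The resolution is simply that the restriction to $i\in[s]$ is a typo; the cited Lemma~3.3 of Raz--Yehudayoff and the paper's own proof of the corollary both treat all $s+1$ summands as non-negative product pairs (the paper averages over a uniformly random $i\in[s+1]$ and then selects the best one), so there is no genuine exceptional term to worry about.
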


\begin{proof}
Write $P_n = \sum_{i\leq s+1} g_ih_i$ as in Lemma~\ref{lem:amir}. For any fixed $S\subseteq [n]$ and a uniformly random $i\in [s+1]$, we have
\[
\avg{i\in [s+1]}{\frac{\coeff(x^S,g_ih_i)}{\coeff(x^S,P_n)}} = \frac{1}{s+1}\sum_{i\in [s]} \frac{\coeff(x^S,g_ih_i)}{\coeff(x^S,P_n)} = \frac{1}{s+1}.
\]
In particular, the above also holds when $S$ is chosen according to $\mu$. The result now follows by averaging over $i\in [s+1]$. 
\end{proof}

Given Corollary~\ref{cor:amir}, to prove Theorem~\ref{thm:lbd}, it suffices to show the following.

\begin{lemma}
\label{lem:main}
There is a probability distribution $\mu$ on subsets $S\subseteq [n]$ such that for any non-negative product pair $(g,h)$ with $gh\leq P_n$, we have
\begin{equation}
\label{eq:cover}
\avg{S\sim \mu}{\coeff(x^S,gh)/\coeff(x^S,P_n)} \leq \exp(-\Omega(n)).
\end{equation}
\end{lemma}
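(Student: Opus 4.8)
The plan is to take $\mu$ to be the distribution of the vertex set $S=V(M)$ of a random induced matching $M$ of size $m=\alpha n$ with all pairwise vertex-distances $\geq 3$ (built by repeatedly adding a uniformly random admissible edge), where $\alpha>0$ is a small constant fixed at the very end, allowed to depend on $d$ (which is itself an absolute constant fixed in Lemma~\ref{lem:eml}); then I would run the cross-monomial second-moment argument from the proof outline. The first step is to record the exact coefficients of $P_n$: from (\ref{eq:defP.2}), $\coeff(x^S,P_n)=\avg{H}{2^{\sum_{i\in S}\deg_H(i)}}$, and since $\sum_{i\in S}\deg_H(i)=\sum_{e\in E(G_n)}|e\cap S|\cdot\mathbf{1}[e\in H]$ with the indicators $\mathbf{1}[e\in H]$ independent and uniform, this factorizes over edges (writing $\bar S=[n]\setminus S$):
\[
\coeff(x^S,P_n)=\prod_{e\in E(G_n)}\frac{1+2^{|e\cap S|}}{2}=\left(\tfrac52\right)^{|E(S)|}\left(\tfrac32\right)^{|E(S,\bar S)|}=\left(\tfrac32\right)^{d|S|}\left(\tfrac{10}{9}\right)^{|E(S)|},
\]
the last equality using $2|E(S)|+|E(S,\bar S)|=d|S|$ ($d$-regularity). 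In particular, for $S$ in the support of $\mu$ (so $G_n[S]$ is a matching and $|E(S)|=m$) this equals $(3/2)^{2dm}(10/9)^m$.

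Now fix a non-negative product pair $(g,h)$ with $gh\leq P_n$ and partition $X=Y\cup Z$, and sample $S',S''\sim\mu$ independently; write $S'=S'_Y\sqcup S'_Z$ and $S''=S''_Y\sqcup S''_Z$ for the intersections with $Y,Z$, and set $\hat S=S'_Y\cup S''_Z$ and $\tilde S=S''_Y\cup S'_Z$. Since $g\in\mathbb R[Y]$ and $h\in\mathbb R[Z]$, coefficients multiply as $\coeff(x^{S'},gh)=\coeff(x^{S'_Y},g)\coeff(x^{S'_Z},h)$, etc., so reassociating the four factors gives $\coeff(x^{S'},gh)\coeff(x^{S''},gh)=\coeff(x^{\hat S},gh)\coeff(x^{\tilde S},gh)\leq\coeff(x^{\hat S},P_n)\coeff(x^{\tilde S},P_n)$ (using $gh\leq P_n$ and non-negativity). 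Writing $R(S):=\coeff(x^S,gh)/\coeff(x^S,P_n)\geq 0$ and plugging in the coefficient formula — the $(3/2)^{d(\cdot)}$ factors cancel because $|\hat S|+|\tilde S|=|S'|+|S''|$, and $|E(S')|=|E(S'')|=m$ — I obtain
\[
\Bigl(\avg{S\sim\mu}{R(S)}\Bigr)^2=\avg{S',S''\sim\mu}{R(S')R(S'')}\;\leq\;\avg{S',S''\sim\mu}{\left(\tfrac{10}{9}\right)^{\,D_\times-D_=}},
\]
where $D_\times=|E(S'_Y,S''_Z)|+|E(S''_Y,S'_Z)|$ and $D_==|E(S'_Y,S'_Z)|+|E(S''_Y,S''_Z)|$, obtained by expanding $|E(\hat S)|,|E(\tilde S)|,|E(S')|,|E(S'')|$ into their within-$Y$, within-$Z$, and crossing parts and cancelling. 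So it suffices to bound this expectation by $\exp(-\Omega(n))$.

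This splits into two combinatorial estimates. First, $D_=$ is large except with tiny probability: $|E(S'_Y,S'_Z)|$ is exactly the number of edges of $M'$ (with $V(M')=S'$) crossing the balanced partition $(Y,Z)$, and revealing $M'$ one edge at a time, at each step at least $|E(G_n)|/2$ edges are still admissible and, by Lemma~\ref{lem:eml}, at least $|E(G_n)|/20$ of those cross $(Y,Z)$ (we lose only $\poly(d)\cdot m=O(\alpha n)$ edges to the distance constraint, negligible for $\alpha$ small), so each new edge crosses with conditional probability $\geq 1/20$; a martingale Chernoff bound gives $|E(S'_Y,S'_Z)|\geq m/40$ (and likewise for $S''$) except with probability $\exp(-\Omega(m))$. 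Second, $D_\times$ is small: conditioning on $M'$, each term of $D_\times$ is at most the number of $G_n$-edges between $V(M')$ and $V(M'')$, hence at most $d\cdot|V(M'')\cap N_{G_n}(V(M'))|$ with $|N_{G_n}(V(M'))|\leq 2md$; in the matching process any fixed vertex lies in $V(M'')$ with probability $O(\alpha)$ and any $k$ fixed vertices do so with probability $(O(\alpha))^k$, which yields an exponential-moment bound $\avg{S',S''}{(10/9)^{c\,D_\times}}\leq\exp\!\bigl(O(c\,\alpha^2 d n\,(10/9)^{O(d)})\bigr)$ for any constant $c$. Choosing $\alpha$ small enough, depending only on $d$, so that this bound (for $c=1,2$) is at most $\exp(\alpha n/500)$, and splitting $\avg{}{(10/9)^{D_\times-D_=}}$ into the part on $\{D_=\geq m/20\}$ plus the rest: the first piece is $\leq(10/9)^{-m/20}\exp(\alpha n/500)=\exp(-\Omega(n))$, and the second is controlled by Cauchy–Schwarz against the $\exp(-\Omega(m))$ failure probability from the first estimate. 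This gives the lemma.

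The main obstacle is the second estimate — controlling the exponential moment of $D_\times$ for two independent random matchings. One needs that the vertex set of a random distance-$3$ matching behaves, for this purpose, like $\approx\alpha n$ nearly independent (or negatively associated) random vertices, so that vertex-membership events have product-bounded moment generating functions; this is exactly the role of the ``pairwise distance $\geq 3$'' condition built into $\mu$. A secondary nuisance is the order of quantifiers on the constants: $d$ is fixed first (large, for Lemma~\ref{lem:eml}) and every loss in the second estimate scales like $(10/9)^{\Theta(d)}$, so $\alpha$ must be taken exponentially small in $d$; this is harmless since $d=O(1)$, but it means the $\Omega(n)$ in the conclusion hides a constant of size $\exp(-\Theta(d))$.
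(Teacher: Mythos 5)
Your proposal is correct and is, at its core, the same cross-monomial second-moment argument the paper uses (same hard distribution: vertex set of a random size-$\Theta(\alpha n)$ matching in the expander with pairwise vertex-distances $\geq 3$; same two ingredients: many of $M$'s edges cross the balanced cut, and few edges of $G$ go between $V(M')$ and $V(M'')$ for independent samples). But your packaging of the expectation is a genuine simplification worth noting. The paper first fixes the profile $(i,j,k) = (|M\cap E(U,V)|, |M\cap E(U)|, |M\cap E(V)|)$, which makes $|R_\ell|, |T_\ell|, |E(R_\ell)|, |E(T_\ell)|$ deterministic, then has to relate the conditional law of $M$ to the unconditional one (Bayes' rule), and to do so must split profiles into ``heavy'' and ``light'' according to $\Pr[\mathcal{E}_{i,j,k}]$. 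You instead work directly with the ratio $R(S) = \coeff(x^S,gh)/\coeff(x^S,P_n)$, so the $B^{|\cdot|}$-factors cancel exactly via $|\hat S|+|\tilde S|=|S'|+|S''|$ and the $A$-exponent collapses to $D_\times - D_=$ with no conditioning. (Your $D_=$ is exactly the paper's $2i$; you simply leave it random and handle the small-$D_=$ event by Cauchy--Schwarz against the Chernoff tail from expansion. This removes the heavy/light machinery and Bayes' rule entirely.)

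Two smaller remarks. First, the parenthetical ``any $k$ fixed vertices lie in $V(M'')$ with probability $(O(\alpha))^k$'' is not literally true — for two adjacent vertices $u,v$, $\Pr[u,v\in V(M'')] = \Theta(\alpha/d)$, not $\Theta(\alpha^2)$ — so the MGF bound for $D_\times$ should be argued edge-by-edge over $M''$ as the paper does (Lemma~\ref{lem:propsS}, item~3): each edge of $M''$ lands in the set $E_1$ of edges near $V(M')$ with conditional probability $O(\alpha d)$ regardless of history, and each such event bumps $D_\times$ by $O(1)$ (this is where the distance-$3$ condition earns its keep — it caps the contribution of any one vertex of $V(M'')$ at $2$, not $d$). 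Second, because you bound the per-vertex contribution by $d$ rather than by $2$, your moment bound carries a $(10/9)^{\Theta(d)}$ factor and forces $\alpha = \exp(-\Theta(d))$, whereas the paper's tighter accounting gets away with $\alpha = \Theta(1/d^2)$. As you say, $d$ is an absolute constant so this is cosmetic; using the factor-$2$ bound would recover the polynomial dependence.
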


We need some preparatory work before proving Lemma~\ref{lem:main}.

\begin{lemma}
\label{lem:Palternate}
There exist constants $A, B > 1$ such that 
\[
P_n(X) = \sum_{S\subseteq [n]} x^S B^{|S|} A^{|E(S)|}.
\]
\end{lemma}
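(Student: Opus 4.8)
The plan is to compute $\coeff(x^S, P_n)$ directly from the expression (\ref{eq:defP.2}) and show it has the claimed product form. From (\ref{eq:defP.2}) we have
\[
\coeff(x^S, P_n) = \avg{\substack{x'_e\in \{0,1\}\\ \forall e\in E(G_n)}}{2^{\sum_{i\in S}\deg_H(i)}},
\]
where $H$ is the subgraph of $G_n$ picked out by the indicator variables $x'_e$. The first step is to re-expand the exponent by summing over edges rather than vertices: since $\deg_H(i) = \sum_{e\ni i} x'_e$, we get $\sum_{i\in S}\deg_H(i) = \sum_{e\in E(G_n)} |e\cap S|\cdot x'_e$. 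Splitting the edge set of $G_n$ into edges inside $S$ (where $|e\cap S| = 2$), edges crossing the cut $(S,[n]\setminus S)$ (where $|e\cap S| = 1$), and edges disjoint from $S$ (which contribute nothing), this becomes $2\sum_{e\in E(S)} x'_e + \sum_{e\in E(S,[n]\setminus S)} x'_e$.

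The second step uses that the $x'_e$ are independent uniform Booleans, so the expectation of $2^{(\cdot)}$ factors as a product over edges. An edge inside $S$ contributes $\frac{1}{2}(2^0 + 2^2) = \frac{5}{2}$, an edge crossing the cut contributes $\frac{1}{2}(2^0 + 2^1) = \frac{3}{2}$, and edges disjoint from $S$ contribute $1$. Hence
\[
\coeff(x^S, P_n) = \left(\tfrac{5}{2}\right)^{|E(S)|}\left(\tfrac{3}{2}\right)^{|E(S,[n]\setminus S)|}.
\]
The third step removes the dependence on the cut size by using $d$-regularity of $G_n$: $\sum_{i\in S}\deg_{G_n}(i) = d|S|$ and also equals $2|E(S)| + |E(S,[n]\setminus S)|$, so $|E(S,[n]\setminus S)| = d|S| - 2|E(S)|$. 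Substituting and regrouping gives
\[
\coeff(x^S, P_n) = \left(\tfrac{3}{2}\right)^{d|S|}\left(\tfrac{5/2}{(3/2)^2}\right)^{|E(S)|} = \left(\tfrac{3}{2}\right)^{d|S|}\left(\tfrac{10}{9}\right)^{|E(S)|},
\]
so the lemma holds with $B = (3/2)^d$ and $A = 10/9$, both of which are $> 1$.

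There is no real obstacle here; the only thing to be careful about is the edge bookkeeping — correctly partitioning $E(G_n)$ according to $|e\cap S|\in\{0,1,2\}$ and invoking $d$-regularity to convert the crossing-edge count into $d|S| - 2|E(S)|$. One should also remark that the identity $\coeff(x^S,P_n) = (5/2)^{|E(S)|}(3/2)^{|E(S,\bar S)|}$ already shows all coefficients are positive (consistent with the parenthetical remark in Corollary~\ref{cor:amir}), and that the constants $A,B$ depend only on $d$, which is itself an absolute constant fixed after Lemma~\ref{lem:eml}.
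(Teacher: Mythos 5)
Your proposal is correct and follows essentially the same route as the paper's proof: split $\sum_{i\in S}\deg_H(i)$ into contributions from internal and crossing edges, use independence of the $x'_e$ to factor the expectation into per-edge terms ($5/2$ and $3/2$), and then invoke $d$-regularity via $2|E(S)| + |E(S,\bar S)| = d|S|$ to eliminate the cut term, arriving at $B = (3/2)^d$ and $A = 10/9$.
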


\begin{proof}
Using (\ref{eq:defP.2}), we obtain
\begin{align*}
P_n(x_1,\ldots,x_n) &= \avg{x'_e: e\in E(G)}{\sum_{S\subseteq [n]}x^S\cdot 2^{\sum_{i\in S}\deg_{H}(i)}} = \sum_{S\subseteq [n]} x^S\cdot \avg{x'_e: e\in E(G)}{2^{\sum_{i\in S}\deg_{H}(i)}}\\
\end{align*}
where $H$ is the random subgraph of $G$ defined by a uniformly random Boolean assignment to the variables in $X'$. Note that 
\[
\sum_{i\in S}\deg_{H}(i) = \sum_{i\in S}\sum_{e\ni i} x'_e = \sum_{e\in E(S,\bar{S})} x'_e+\sum_{e\in E(S)}2x'_e.
\]
Hence, we get for any $S\subseteq [n]$, 
\begin{align*}
\avg{x'_e: e\in E(G)}{2^{\sum_{i\in S}\deg_{H}(i)}} &= \prod_{e\in E(S,\bar{S})}\avg{x'_e}{2^{x'_e}} \cdot \prod_{e\in E(S)}\avg{x'_e}{4^{x'_e}} = (3/2)^{|E(S,\bar{S})|}\cdot (5/2)^{|E(S)|}\\
&= (3/2)^{|S|d}\cdot \frac{(5/2)^{|E(S)|}}{(3/2)^{2|E(S)|}}
\end{align*}
where for the last equality, we have used the fact that $2|E(S)| + |E(S,\bar{S})| = |S|d.$ Note that this proves the lemma with $B = (3/2)^d$ and $A = (10/9).$
\end{proof}

We now define the probability distribution $\mu$ that will be shown to have the property in (\ref{eq:cover}). The distribution is defined by the following sampling process. Let $m = \alpha n$ where $\alpha\in (0,1)$ is a small constant specified below.

\vspace*{0.5cm}
\noindent
Sampling Algorithm $\mc{S}:$
\begin{enumerate}
\item Set $M = \emptyset.$ (Eventually, $M$ will be a matching of size $m/2$ in $G$.)
\item For $i = 1$ to $(m/2)$, do the following.
\begin{enumerate}
\item Remove all vertices from $G_n$ that are at distance at most $2$ from any vertex in the matching $M$. Let $G_n^{(i)}$ be the resulting graph. 
\item Choose a uniformly random edge $e_i$ from $E(G_n^{(i)})$ and add it to $M$. 
\end{enumerate}
\item Output $M$. 
\end{enumerate}
The above algorithm defines a distribution $\nu$ over matchings $M$ in $G_n$ of size $m/2$. We define $S = V(M)$ to be the set of vertices sampled by the algorithm. This defines a probability distribution $\mu$ over subsets of $[n]$. 

We will need the following properties of the above algorithm.
\begin{lemma}[Properties of $\mc{S}.$]
\label{lem:propsS}
Let $M$ be sampled as in $\mc{S}$ above and let $S = V(M)$. Then we have
\begin{enumerate}
\item Assuming that $\alpha \leq 1/(100\cdot d^2)$, we have $|M| = (m/2)$, $|S| = m$ and $E(S) = M$ with probability $1$. 
\item Let $(U,V)$ be any partition of $V(G_n)$ such that $n/3 \leq |U|,|V|\leq 2n/3.$ Then, as long as $\alpha \leq 1/(100\cdot d^2)$, for some absolute constant $\gamma > 0,$ we have 
\[
\prob{M}{|M\cap E(U,V)|\leq \gamma m}\leq \exp(-\gamma m).
\]
\item Let $M_1$ and $M_2$ be two independent samples obtained by running $\mc{S}$ twice, and let $S_i = V(M_i)$ ($i\in [2]$). Let $(U,V)$ be a partition of $V(G_n)$ as above. Define $R_i = S_i \cap U$ and $T_i = S_i \cap V.$ Then, for $\alpha \leq \gamma\ln A/(100\cdot A^4d^2)$ we have
\[
\avg{M_1,M_2}{A^{|E(R_1,T_2)| + |E(R_2,T_1)|}} \leq A^{\gamma m/4}.
\]
Here, $\gamma$ is as in the previous item and $A$ is as in the statement of Lemma~\ref{lem:Palternate}.
\end{enumerate}
\end{lemma}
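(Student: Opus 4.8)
\textbf{Proof plan for Lemma~\ref{lem:propsS}.}

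\textit{Part 1.} The plan is to argue that the sampling algorithm never gets ``stuck'' and that the edges it picks form an induced matching. At step $i$, we have already picked $i-1 \leq m/2$ edges, so $|V(M)| \leq m$ vertices have been chosen; removing all vertices within distance $2$ of $V(M)$ deletes at most $|V(M)| \cdot (1 + d + d(d-1)) \leq m d^2$ vertices, which for $\alpha \leq 1/(100 d^2)$ is at most $n/100 < n$, so $G_n^{(i)}$ is nonempty, and since $G_n$ is $d$-regular with $d$ a large constant we can in fact ensure $G_n^{(i)}$ has an edge (a nonempty induced subgraph on more than one vertex of a connected $d$-regular graph; more carefully, removing $\leq n/100$ vertices from a $d$-regular expander leaves a subgraph with $\Omega(n)$ edges). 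Hence $|M| = m/2$ with probability $1$. Because each new edge $e_i$ is chosen in the graph with all vertices within distance $2$ of the previous matching removed, the endpoints of $e_i$ are at distance $\geq 3$ from all previously chosen vertices; in particular no two matching edges share a vertex and there is no edge of $G_n$ between $V(e_i)$ and $V(e_j)$ for $i \neq j$. This gives $|S| = m$ and $E(S) = M$ deterministically.

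\textit{Part 2.} Fix a balanced partition $(U,V)$. By Lemma~\ref{lem:eml}, at least $|E(G_n)|/10 = nd/20$ edges of $G_n$ cross the cut. I would analyze the sampling step by step: conditioned on the first $i-1$ choices, $G_n^{(i)}$ is obtained from $G_n$ by deleting at most $(i-1) \cdot 2 \cdot d^2 \leq m d^2 \leq n/100$ vertices. Each deleted vertex kills at most $d$ edges, so at most $nd/100$ edges are removed in total, leaving at least $nd/20 - nd/100 = nd/25$ crossing edges still present in $G_n^{(i)}$, out of at most $nd/2$ edges total. Hence, conditioned on the history, the probability that $e_i \in E(U,V)$ is at least $(nd/25)/(nd/2) = 2/25 =: 2\gamma'$. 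Thus $|M \cap E(U,V)|$ stochastically dominates a sum of $m/2$ independent Bernoulli$(2\gamma')$ variables (formally, via a coupling over the randomness of the $e_i$'s). A Chernoff bound then gives $\Pr[|M \cap E(U,V)| \leq \gamma' m/2] \leq \exp(-\Omega(m))$, and choosing $\gamma$ to be a small enough absolute constant (e.g. $\gamma = \gamma'/2 = 1/25$, and shrinking further if needed so that the exponent is at least $\gamma m$) yields the claim.

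\textit{Part 3.} This is the step I expect to be the main obstacle, since it must control an exponential moment of $|E(R_1,T_2)| + |E(R_2,T_1)|$ where $M_1, M_2$ are independent. The key observation is that $R_1, T_1 \subseteq S_1 = V(M_1)$ are determined by $M_1$, and similarly $R_2, T_2$ by $M_2$; by Part 1, $|S_1| = |S_2| = m$. The plan is to condition on $M_1$ (equivalently on $S_1$) and bound $\avg{M_2}{A^{|E(R_1,T_2)|+|E(R_2,T_1)|}}$. Every edge counted lies between a vertex of $S_2$ and a vertex of $S_1$, i.e. in $E(S_2, S_1)$; so the exponent is at most $|E(S_1, S_2)|$, which is at most $\sum_{w \in S_2} \deg_{G_n}(w \to S_1)$. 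The crucial point is that $S_2$ is built up greedily with large ``exclusion zones'': when the $i$-th edge $e_i = \{a_i, b_i\}$ of $M_2$ is chosen, its endpoints are uniformly random among the $\Omega(n)$ surviving vertices, only $\leq m d^2 \leq n/100$ of which have even been touched. For each such endpoint $w$, the number of neighbors it has in the \emph{fixed} set $S_1$ (of size $m = \alpha n$) is, in expectation over the uniform choice of $w$ among $\Omega(n)$ vertices, at most $O(|E(S_1, \cdot)|/n) = O(md/n) = O(\alpha d)$; more to the point, $\Pr[w \text{ has a neighbor in } S_1] \leq |S_1| d / \Omega(n) = O(\alpha d)$. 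I would then bound, for each of the $\leq m$ endpoints $w$ added to $S_2$, the factor $\avg{}{A^{\deg_{G_n}(w \to S_1)}}$ conditioned on the history: with probability $\geq 1 - O(\alpha d)$ this factor is $A^0 = 1$, and it is always at most $A^d$, so the conditional expectation is at most $1 + O(\alpha d) \cdot A^d \leq \exp(O(\alpha d) A^d)$. Multiplying over the $\leq m$ endpoints (using the tower property / a martingale-style iteration over the sequence of endpoint choices) gives $\avg{M_2}{A^{|E(S_1,S_2)|}} \leq \exp(O(\alpha d A^d) \cdot m) = \exp(O(\alpha d A^d) n)$. Since this bound is uniform over $M_1$, it also bounds the full expectation. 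Finally, choosing $\alpha \leq \gamma \ln A / (100 A^4 d^2)$ (the hypothesis), the exponent is at most $(\gamma m/4) \ln A$, i.e. the whole expectation is at most $A^{\gamma m / 4}$, as required. The delicate points to get right are (i) ensuring the surviving vertex set always has size $\Omega(n)$ so the ``random endpoint hits $S_1$'' probability is genuinely $O(\alpha d)$, and (ii) organizing the conditioning so that the per-endpoint bound can be multiplied cleanly — both handled by the deterministic size bounds from Part 1 and a step-by-step revelation of the $2 \cdot (m/2) = m$ endpoints of $M_2$.
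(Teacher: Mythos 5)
Parts 1 and 2 of your plan are essentially the paper's proof: the same per-step accounting for how few edges the algorithm can delete (hence it never gets stuck), and the same per-step lower bound of $\Omega(1)$ on the probability that $e_i$ crosses the cut followed by a tail bound (the paper writes out the binomial tail directly; you invoke stochastic domination plus Chernoff, which is the same calculation). Those parts are fine.

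Part 3 has a genuine gap. You bound $|E(R_1,T_2)|+|E(R_2,T_1)|$ by (essentially) $\sum_{w\in S_2}\deg_{G_n}(w\to S_1)$ and then control a per-vertex factor $\avg{}{A^{\deg(w\to S_1)}}\le 1+O(\alpha d)\cdot A^d$, using only the trivial bound $\deg(w\to S_1)\le d$. This produces an overall bound of the form $\exp\bigl(O(\alpha d A^d)\cdot m\bigr)$, and to conclude you would need $\alpha d A^d\lesssim \gamma\ln A$. Under the stated hypothesis $\alpha\le \gamma\ln A/(100 A^4 d^2)$ this amounts to $A^{d-4}\lesssim d$, which is \emph{false} once $d$ is a large constant (and $d$ must be large for Lemma~\ref{lem:eml} to hold): $A=10/9>1$ is fixed, so $A^{d-4}$ grows exponentially in $d$ while $d$ grows linearly. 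So your Part~3 does not establish the inequality under the given bound on $\alpha$.

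The missing ingredient is precisely the one the sampler $\mc{S}$ was engineered to give you: because $M_1$ is chosen so that unmatched vertices of $S_1$ are at pairwise distance at least $3$, any vertex $w$ has \emph{at most two} neighbours in $S_1$ (if $u,v\in N(w)\cap S_1$, then $d(u,v)\le 2$, so $u,v$ must be a matched pair in $M_1$, and there is only one such partner for each). This caps the per-vertex exponent at $A^2$ rather than $A^d$. The paper exploits this via the set $E_1$ of edges incident to the distance-$\le 1$ neighbourhood $\tilde S_1$: it shows $|E(R_1,T_2)|+|E(R_2,T_1)|\le 4|E_1\cap M_2|$ (each of the $\le 2|E_1\cap M_2|$ relevant vertices of $S_2$ contributes at most $2$), and then runs the same per-step moment computation on the event $e_j'\in E_1$, getting $(1+4\alpha d A^4)^{m/2}$. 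The $A^4$, not $A^d$, is what makes the hypothesis $\alpha\le\gamma\ln A/(100A^4d^2)$ sufficient. Replacing your $A^d$ with $A^2$ (or following the paper's $A^4$-per-matched-pair version) closes the gap; without it, the argument fails for the range of $d$ the construction actually requires. (Two smaller nits: your last displayed chain writes $\exp(O(\alpha dA^d)\,m)=\exp(O(\alpha dA^d)\,n)$ but $m=\alpha n$, so the exponent should be $O(\alpha^2 dA^d)\,n$; and your ``endpoints are uniformly random among the surviving vertices'' is not exactly right since the edge is uniform and endpoints are degree-biased, though this is easily repaired by bounding the fraction of \emph{surviving edges} incident to $\tilde S_1$, as the paper does.)
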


\begin{proof}
Item 1 easily follows from the definition of the Sampling algorithm $\mc{S}.$ Note that in each iteration of Step 2, we remove at most $2\cdot (1+d+d^2)$ vertices and hence at most $2(d+d^2+d^3)$ edges from the graph $G_n$. Hence, the upper bound on $\alpha$ guarantees that after $i < (m/2)$ iterations of the for loop, the number of edges removed from the graph is at most 
\[
2i\cdot (d^3+d^2+d) < 4md^3 = 4\alpha nd^3 < \frac{nd}{2},
\]
which allows the algorithm to choose an edge from the graph $G_n^{(i)}$ to add to the  matching $M$.

For Item 2, we proceed as follows. For $i\in \{1,\ldots,m/2\}$, let $e_i$ be the edge chosen by the sampling algorithm $\mc{S}$ in the $i$th iteration of Step 2. Fix any choices of all the $e_j$ with $j < i$ and consider the $i$th iteration of Step 2. The probability that $e_i$ lies in $E(U,V)$ is $|E_i(U,V)|/|E(G_n^{(i)})|$ where $E_i(U,V)$ is the set of edges in $G_n^{(i)}$ with one endpoint each in $U$ and $V$. Note that
\[
|E_i(U,V)|\geq |E(U,V)| - |E(G_n)\setminus E(G_n^{(i)})| \geq \frac{nd}{10} - 2(i-1)\cdot (d^3+d^2+d) \geq \frac{nd}{10}-\alpha n (3d^3) \geq \frac{nd}{20}
\]
where the second inequality follows from Lemma~\ref{lem:eml} and an analysis similar to Item 1 above; the last two inequalities follow from the fact that $2(i-1)< m = \alpha n \leq n/(100\cdot d^2).$ Hence, we have shown that for each $i$,
\[
\prob{}{e_i\in E(U,V)\ |\ e_1,\ldots,e_{i-1}} = \frac{|E_i(U,V)|}{|E(G_n^{(i)})|} \geq \frac{(nd)/20}{(nd)/2} = \frac{1}{10}.
\]
In particular, for any $T\subseteq [m]$, the probability that for every $i\in T$, $e_i\not\in E(U,V)$ can be upper bounded by $(9/10)^{|T|}.$

Thus, the probability that $|M\cap E(U,V)|\leq \ell = \gamma m$ can be bounded by
\begin{align*}
\prob{M}{\exists T\in \binom{[m]}{m-\ell}\ s.t.\ \forall i\in T, e_i\not\in E(U,V) }
&\leq \sum_{T} \prob{M}{\forall i\in T, e_i\not\in E(U,V) }\\
&\leq \binom{m}{\ell} \left(\frac{9}{10}\right)^{m-\ell} \leq \left(\frac{em}{\ell}\right)^\ell\cdot \left(\frac{9}{10}\right)^{m-\ell}\\
&= \left(\frac{e}{\gamma}\cdot \left(\frac{9}{10}\right)^{(1/\gamma)-1}\right)^{\gamma m}\leq \exp(-\gamma m)
\end{align*}
as long as $\gamma$ is bounded by a small enough absolute constant. This finishes the proof of Item 2.

We now prove Item 3. Fix any possible matching $M_1$ as sampled by the algorithm $\mc{S}.$ It suffices to bound $\avg{M_2}{A^{|E(R_2,T_1)| + |E(R_1,T_2)|}}$ for each such $M_1$. Let $\tilde{S}_1$ denote the set of vertices that are at distance at most $1$ from $S_1$ and let $E_1$ denote the set of edges that have at least one endpoint in $\tilde{S}_1$. Note that $|E_1|\leq |\tilde{S}_1|d \leq |S_1|d^2 = md^2 = \alpha nd^2.$

We claim that $|E(R_2,T_1)| + |E(R_1,T_2)|\leq 4|E_1\cap M_2|$. The reason for this is that if a vertex $i\in S_2$ is incident to an edge $e$ in $E(R_1,T_2)\cup E(R_2,T_1)$ then $i\in \tilde{S}_1$ and hence the edge $e'\in M_2$ involving $i$ is an edge in $E_1\cap M_2$. In particular, the number of such vertices $i\in S_2$ is at most $2|E_1\cap M_2|.$ Further, each such vertex $i$ is adjacent to at most $2$ vertices in $S_1$ since vertices in $S_1$ that are not adjacent via an edge in $M_1$ are at distance at least $3$ from each other. Thus each such vertex $i$ contributes at most $2$ to $|E(R_2,T_1)| + |E(R_1,T_2)|.$ This yields the claimed inequality. Thus it suffices to bound $\avg{M_2}{A^{4|E_1\cap M_2|}}.$

We start with a tail bound for $|E_1\cap M_2|.$ 
Let $M_2 = \{e_1',\ldots,e_{m/2}'\}$ where $e_j'$ is the $j$th edge added to $M_2$ by the algorithm $\mc{S}.$ Conditioned on $e_1',\ldots,e_{j-1}',$ the probability that $e_j' \in E_1$ is at most
\[
\frac{|E_1|}{|E(G_n^{(j)})|} = \frac{|E_1|}{|E(G_n)|-|E(G_n)\setminus E(G_n^{(j)})|} \leq \frac{\alpha nd^2}{(nd/2)-3\alpha n d^3} \leq \frac{\alpha nd^2}{(nd)/4} = 4\alpha d
\]
where for the first inequality we have bounded $|E(G_n)\setminus E(G_n^{(j)})|$ and $|E_1|$ as above and for the second inequality we have used the bound on $\alpha.$ Hence, we have
\[
\prob{M_2}{|E_1\cap M_2|\geq i} \leq \sum_{T\in \binom{m/2}{i}}\prob{M_2}{\forall j\in T, e_j'\in E_1}\leq \binom{m/2}{i}(4\alpha d)^i.
\]

This allows us to bound $\avg{M_2}{A^{4|E_1\cap M_2|}}$ for any fixed $M_1$ output by $\mc{S}$.
\begin{align*}
\avg{M_2}{A^{4|E_1\cap M_2|}} &\leq \sum_{i = 0}^{m/2} A^{4i}\prob{M_2}{|E_1\cap M_2|\geq i}\\
&\leq \sum_{i = 0}^{m/2} A^{4i} \cdot \binom{m/2}{i} (4\alpha d)^i = \left(1+4\alpha d A^4\right)^{m/2}\\
&\leq (1+(\gamma \ln A)/2)^{m/2} \leq \exp((m\gamma \ln A )/4) = A^{\gamma m/4}
\end{align*}
where the third inequality follows from the bound $\alpha \leq \gamma\ln A/(100\cdot A^4d^2)$ assumed in the statement of the lemma. 
\end{proof}

We are now ready to prove Lemma~\ref{lem:main}, which will complete the proof of Theorem~\ref{thm:lbd}. 

\begin{proof}[Proof of Lemma~\ref{lem:main}.]
We set $m = \alpha n$ so that $\alpha $ is a positive constant upper bounded by $\gamma \ln A/(100\cdot A^4\cdot d^2)$ and $m$ is even. Assume that $M$ is as sampled above by sampling algorithm $\mc{S}$ and $S = V(M)$. This defines the distribution $\mu$ on subsets of $[n]$. 

Let $(g,h)$ be any non-negative product pair such $gh\leq P_n$. Consequently, there exists a partition $(U,V)$ of $V(G_n) = [n]$ such that $n/3\leq |U|,|V|\leq 2n/3$ and $g\in \mathbb{R}[x_i: i\in U], h\in \mathbb{R}[x_j: j\in V]$. 

Let $\mc{E} = \mc{E}(M)$ denote the event that $|M\cap E(U,V)|\leq \gamma m.$ By Lemma~\ref{lem:propsS} item 2, we know that $\prob{M}{\mc{E}}\leq \exp(-\Omega(n))$ and hence we have 
\begin{align}
\avg{M}{\frac{\coeff(x^S,gh)}{\coeff(x^S,P_n)}} &\leq \avg{M}{\frac{\coeff(x^S,gh)}{\coeff(x^S,P_n)}\  |\ \mc{E}}\prob{M}{\mc{E}} + \avg{M}{\frac{\coeff(x^S,gh)}{\coeff(x^S,P_n)}\ |\ \overline{\mc{E}}}\prob{M}{\bar{\mc{E}}}\notag\\
&\leq \prob{M}{\mc{E}} + \avg{M}{\frac{\coeff(x^S,gh)}{\coeff(x^S,P_n)}\ |\ \overline{\mc{E}}}\notag\\
&\leq  \exp(-\Omega(n)) +\frac{1}{B^m\cdot A^{m/2}} \avg{M}{\coeff(x^S,gh)\ |\ \overline{\mc{E}}}
\label{eq:main.1}
\end{align}
where for the second inequality we have used that $gh\leq P_n$, and for the final inequality we have used our bound on $\prob{M}{\mc{E}}$ along with Lemma~\ref{lem:Palternate} and Lemma~\ref{lem:propsS} item 1.

We now bound the latter term in (\ref{eq:main.1}). For any $i,j,k,$ let $\mc{E}_{i,j,k} = \mc{E}_{i,j,k}(M)$ denote the event that $|M\cap E(U,V)| = i, |M\cap E(U)| = j,$ and $|M\cap E(V)| = k.$ The event $\bar{\mc{E}}$ is partitioned into $\mc{E}_{i,j,k}$ where $i+j+k = (m/2)$ and $i\geq \gamma m$. Let $\mc{T}$ denote the set of such triples $(i,j,k).$ We have

\begin{align*}
\avg{M}{\coeff(x^S,gh)\ |\ \overline{\mc{E}}} &= \sum_{(i,j,k)\in \mc{T}} \avg{M}{\coeff(x^S,gh)\ |\ \mc{E}_{i,j,k}}\cdot \prob{M}{\mc{E}_{i,j,k}\ |\ \overline{\mc{E}}}\\
\end{align*}

Call a triple $(i,j,k)\in \mc{T}$ \emph{heavy} if $\prob{M}{\mc{E}_{i,j,k}}\geq A^{-\gamma m/4}$ and \emph{light} otherwise. Note that as $\prob{M}{\overline{\mc{E}}} = 1-\exp(-\Omega(n)) \geq 1/2,$ we have $\prob{M}{\mc{E}_{i,j,k}\ |\ \overline{\mc{E}}}\leq 2 \prob{M}{\mc{E}_{i,j,k}}.$ In particular, if $(i,j,k)$ is light, we have $\prob{M}{\mc{E}_{i,j,k}\ |\ \overline{\mc{E}}}\leq 2A^{-\gamma m/4}= \exp(-\Omega(n)).$ Plugging this into the expression above, we get

\begin{align}
&\avg{M}{\coeff(x^S,gh)\ |\ \overline{\mc{E}}} = \sum_{(i,j,k)\in \mc{T}} \avg{M}{\coeff(x^S,gh)\ |\ \mc{E}_{i,j,k}}\cdot \prob{M}{\mc{E}_{i,j,k}\ |\ \overline{\mc{E}}}\notag\\
&\leq |\{(i,j,k)\ |\ \text{$(i,j,k)$ light}\}|\cdot B^mA^{m/2}\cdot \exp(-\Omega(n)) + \max_{(i,j,k) \text{ heavy}}\avg{M}{\coeff(x^S,gh)\ |\ \mc{E}_{i,j,k}}\notag\\
&\leq \exp(-\Omega(n))B^mA^{m/2} + \max_{(i,j,k) \text{ heavy}}\avg{M}{\coeff(x^S,gh)\ |\ \mc{E}_{i,j,k}}.\label{eq:main.2}
\end{align}

It suffices therefore to bound $\avg{M}{\coeff(x^S,gh)\ |\ \mc{E}_{i,j,k}}$ for any heavy $(i,j,k)$. This is the main part of the proof. 

Fix some $(i,j,k)\in \mc{T}$ that is heavy. Let $C = \avg{M}{\coeff(x^S,gh)\ |\ \mc{E}_{i,j,k}}.$ Thus, we get
\[
C^2 = \avg{M_1,M_2}{\coeff(x^{S_1},gh)\coeff(x^{S_2},gh)}
\]
where $M_1$ and $M_2$ are independent samples of $M$ conditioned on the event $\mc{E}_{i,j,k}(M),$ and $S_\ell = V(M_\ell)$ for $\ell\in \{1,2\}.$ Define $R_\ell = S_\ell \cap U$ and $T_\ell = S_\ell \cap V.$ We make some simple observations. For each $\ell\in [2]$
\begin{enumerate}
\item $|R_\ell| = i+2j$ and $|T_\ell| = i+2k$,
\item $|E(R_\ell)| = j, |E(T_\ell)| = k$,
\item $\coeff(x^{S_\ell},gh) = \coeff(x^{R_\ell},g)\cdot \coeff(x^{T_\ell},h).$
\end{enumerate}
Thus, we have
\begin{align}
C^2 &= \avg{M_1,M_2}{\coeff(x^{R_1},g)\coeff(x^{T_1},h)\coeff(x^{R_2},g)\coeff(x^{T_2},h)}\notag\\
& = \avg{M_1,M_2}{\coeff(x^{R_1\cup T_2},gh)\coeff(x^{R_2\cup T_1},gh)}\notag\\
&\leq \avg{M_1,M_2}{\coeff(x^{R_1\cup T_2},P_n)\coeff(x^{R_2\cup T_1},P_n)}\notag\\
&= \avg{M_1,M_2}{B^{|R_1|+|T_2|}\cdot A^{|E(R_1\cup T_2)|}\cdot B^{|R_2|+|T_1|}\cdot A^{|E(R_2\cup T_1)|}}\notag\\
&= \avg{M_1,M_2}{B^{4(i+j+k)}\cdot A^{|E(R_1)|+|E(T_1)|+|E(R_2)|+|E(T_2)|+|E(R_1,T_2)|+|E(R_2,T_1)|}}\notag\\
&= B^{2m}A^{m-2i}\cdot \avg{M_1,M_2}{A^{|E(R_1,T_2)|+|E(R_2,T_1)|}}\notag\\
&\leq B^{2m}A^{m(1-2\gamma)}\cdot \avg{M_1,M_2}{A^{|E(R_1,T_2)|+|E(R_2,T_1)|}}\label{eq:main.3}
\end{align}
where we used the observations above for the equalities and for the final inequality, we used the fact that $i\geq \gamma m$ for all $(i,j,k)\in \mc{T}$.

To bound the latter term in (\ref{eq:main.3}), we consider a similar expression where $M_1$ and $M_2$ are replaced by $M_1'$ and $M_2'$ which are independent random outputs of the algorithm $\mc{S}$ (without any conditioning). In this case, by Lemma~\ref{lem:propsS} item 3, we have
\[
\avg{M_1',M_2'}{A^{|E(R_1',T_2')|+|E(R_2',T_1')|}} \leq A^{\gamma m/4},
\]
where $R_\ell',T_\ell'$ are defined analogously for $\ell \in [2]$. Thus, using Bayes's rule we have
\[
\avg{M_1,M_2}{A^{|E(R_1,T_2)|+|E(R_2,T_1)|}} \leq \frac{\avg{M_1',M_2'}{A^{|E(R_1',T_2')|+|E(R_2',T_1')|}}}{\prob{M_1',M_2'}{\mc{E}_{i,j,k}(M_1') \wedge \mc{E}_{i,j,k}(M_2')}} \leq A^{3\gamma m/4}
\]
where the last inequality uses the fact that $(i,j,k)$ is heavy. Plugging the above into (\ref{eq:main.3}), we have
\[
C \leq B^{m}A^{m/2-5\gamma m/8} = B^m A^{m/2} \exp(-\Omega(n)).
\]
As this holds for any heavy $(i,j,k)$, using (\ref{eq:main.2}) and (\ref{eq:main.1}), we obtain the statement of the lemma.
\end{proof}

\paragraph*{Acknowledgements.} The author is grateful to Mrinal Kumar and Amir Yehudayoff for very helpful discussions and encouragement. This work was done during a visit to the ``Lower Bounds Program in Computational Complexity'' program at the Simons Institute for the Theory of Computing; the author is grateful to the organizers of this program and the Simons Institute for their hospitality. The author is also grateful to Igor Sergeev and an anonymous reviewer for pointing out the large body of work on monotone circuit lower bounds carried out by the Russian mathematical community (see~\cite{gashkov-sergeev}). Finally, the author would like to thank the anonymous reviewers (who reviewed this paper for the ACM Transactions on Computation Theory) for their comments and suggestions.

\bibliographystyle{abbrv}
\bibliography{mvpnp-refs}

\end{document}